\NewDocumentCommand{\LeftComment}{s m}{%
  \Statex \IfBooleanF{#1}{\hspace*{\ALG@thistlm}}\(\triangleright\) #2}
\begin{document}

\title{
Sharded Elimination and Combining for Highly-Efficient Concurrent Stacks
}


\author{Ajay Singh}
\affiliation{%
  \institution{FORTH ICS}
  \streetaddress{}
  \city{}
  \country{}}
  \orcid{0000-0001-6534-8137}
\email{ajay.singh1@uwaterloo.ca}

\author{Nikos Metaxakis}
\affiliation{%
  \institution{FORTH ICS}
  \streetaddress{}
  \city{}
  \country{}}
  \orcid{0009-0000-6731-474X}
\email{csdp1437@csd.uoc.gr}

\author{Panagiota Fatourou}
\affiliation{%
  \institution{FORTH ICS}
  \streetaddress{}
  \city{}
  \country{}}
  \orcid{0000-0002-6265-6895}
\email{faturu@ics.forth.gr}


\newcommand{\punt}[1]{}
\newcommand{\cmnt}[1]{}

\definecolor{xxxcolor}{rgb}{0.8,0,0}
\newcommand{\XXX}[1]{{\color{xxxcolor} XXX: #1}\xspace}

\newcommand{\func}[1]{\texttt{#1}}
\newcommand{\var}[1]{\texttt{#1}}

\newcommand{\nosplit}{\linebreak}

\def\nohyphens{\hyphenpenalty=10000\exhyphenpenalty=10000}

\newcommand{\tilda}{\symbol{126}}

\newcommand{\ang}[1]{\langle #1 \rangle}
\newcommand{\Ang}[1]{\Big\langle #1 \Big\rangle}
\newcommand{\ceil}[1]{\lceil #1 \rceil}
\newcommand{\floor}[1]{\lfloor #1 \rfloor}

\newtheorem*{lemmanonum}{Lemma}
\newtheorem*{corollarynonum}{Corollary}
\newtheorem*{theoremnonum}{Theorem}
\newtheorem*{observationnonum}{Observation}
\newtheorem{property}[theorem]{Property}
\newtheorem{guarantee}{Guarantee}
\newtheorem{requirement}[theorem]{Requirement}
\newcounter{history}
\newcommand{\hist}[1]{\refstepcounter{history} {#1}}

\newcommand{\trevor}[1]{\textbf{[[[Trevor: #1]]]}}
\newcommand{\ajay}[1]{\textcolor{red}{#1}}
\newcommand{\y}[1]{\textcolor{green}{#1}}
\newcommand{\pf}[1]{\textbf{[[[PF: #1]]]}}

\newcommand{\True}{\mbox{\texttt{true}}}
\newcommand{\False}{\mbox{\texttt{false}}}

\newcommand{\AggStack}{{\sc SEC}}
\newcommand{\DCE}{{\sc SEC}}
\newcommand{\DEC}{{\sc SEC}}
\newcommand{\SEC}{{\sc SEC}}
\newcommand{\DCEPlus}{{\sc DCE+}}
\newcommand{\TRB}{{\sc TRB}}
\newcommand{\EB}{{\sc EB}}
\newcommand{\FC}{{\sc FC}}
\newcommand{\CCElim}{{\sc CC}}
\newcommand{\TSI}{{\sc TSI}}
\newcommand{\LCRQ}{{\sc LCRQ}}


\newtheorem{acknowledgement}[theorem]{Acknowledgement}
\newtheorem{observation}[theorem]{Observation}
\newtheorem{invariant}[theorem]{Invariant}


\newcommand{\chapref}[1]{Chapter~\ref{chap:#1}}
\newcommand{\secref}[1]{Section~\ref{sec:#1}}
\newcommand{\figref}[1]{Figure~\ref{fig:#1}}
\newcommand{\tabref}[1]{Table~\ref{tab:#1}}
\newcommand{\stref}[1]{step~\ref{step:#1}}
\newcommand{\thmref}[1]{Theorem~\ref{thm:#1}}
\newcommand{\lemref}[1]{Lemma~\ref{lem:#1}}
\newcommand{\insref}[1]{line~\ref{ins:#1}}
\newcommand{\corref}[1]{Corollary~\ref{cor:#1}}
\newcommand{\axmref}[1]{Proposition~\ref{axm:#1}}
\newcommand{\defref}[1]{Definition~\ref{def:#1}}
\newcommand{\eqnref}[1]{Eqn(\ref{eq:#1})}
\newcommand{\eqvref}[1]{Equivalence~(\ref{eqv:#1})}
\newcommand{\ineqref}[1]{Inequality~(\ref{ineq:#1})}
\newcommand{\exref}[1]{Example~\ref{ex:#1}}
\newcommand{\propref}[1]{Property~\ref{prop:#1}}
\newcommand{\obsref}[1]{Observation~\ref{obs:#1}}
\newcommand{\asmref}[1]{Assumption~\ref{asm:#1}}
\newcommand{\thref}[1]{Thread~\ref{th:#1}}
\newcommand{\trnref}[1]{Transaction~\ref{trn:#1}}
\newcommand{\lstref}[1]{listing~\ref{lst:#1}}
\newcommand{\Lstref}[1]{Listing~\ref{lst:#1}}

\newcommand{\subsecref}[1]{SubSection{\ref{subsec:#1}}}

\newcommand{\histref}[1]{\ref{hist:#1}}

\newcommand{\apnref}[1]{Appendix~\ref{apn:#1}}
\newcommand{\invref}[1]{Invariant~\ref{inv:#1}}

\newcommand{\Chapref}[1]{Chapter~\ref{chap:#1}}
\newcommand{\Secref}[1]{Section~\ref{sec:#1}}
\newcommand{\Figref}[1]{Figure~\ref{fig:#1}}
\newcommand{\Tabref}[1]{Table~\ref{tab:#1}}
\newcommand{\Stref}[1]{Step~\ref{step:#1}}
\newcommand{\Thmref}[1]{Theorem~\ref{thm:#1}}
\newcommand{\Lemref}[1]{Lemma~\ref{lem:#1}}
\newcommand{\Corref}[1]{Corollary~\ref{cor:#1}}
\newcommand{\Axmref}[1]{Proposition~\ref{axm:#1}}
\newcommand{\Defref}[1]{Definition~\ref{def:#1}}
\newcommand{\Eqref}[1]{eq(\ref{eq:#1})}
\newcommand{\Eqvref}[1]{Equivalence~(\ref{eqv:#1})}
\newcommand{\Ineqref}[1]{Inequality~(\ref{ineq:#1})}
\newcommand{\Exref}[1]{Example~\ref{ex:#1}}
\newcommand{\Propref}[1]{Property~\ref{prop:#1}}
\newcommand{\Obsref}[1]{Observation~\ref{obs:#1}}
\newcommand{\Asmref}[1]{Assumption~\ref{asm:#1}}
\newcommand{\reqref}[1]{Requirement~\ref{req:#1}}
\newcommand{\guarref}[1]{Guarantee~\ref{guar:#1}}

\newcommand{\Lineref}[1]{Line~\ref{lin:#1}}
\newcommand{\lineref}[1]{line~\ref{lin:#1}}
\newcommand{\algoref}[1]{Algorithm~\ref{algo:#1}}
\newcommand{\Algoref}[1]{{\sf Algorithm$_{\ref{algo:#1}}$}}

\newcommand{\Apnref}[1]{Section~\ref{apn:#1}}
\newcommand{\Invref}[1]{Invariant~\ref{inv:#1}}
\newcommand{\Confref}[1]{Conflict~\ref{conf:#1}}

\newcommand{\theqed}{$\Box$}
\newcommand{\nsqed}{\hspace*{\fill} \theqed}

\renewcommand{\thefootnote}{\alph{footnote}}
\newcommand{\ignore}[1]{}
\newcommand{\myparagraph}[1]{\noindent\textbf{#1}}

%

\newcommand{\lastup} {lastUpdt}
\newcommand{\lupdt}[2] {#2.lastUpdt(#1)}
\newcommand{\fkmth}[3] {#3.firstKeyMth(#1, #2)}

%
\newcommand{\nz}{\emph{restartable}\xspace}
\newcommand{\cas}[3] {CAS(#1, #2, #3)}

\newcommand{\qp}{\emph{quiescent-phase}\xspace}
\newcommand{\rdp}{\emph{$\Phi_{read}$}\xspace}
\newcommand{\wtp}{\emph{$\Phi_{write}$}\xspace}

\newcommand{\rd}{\emph{reader}\xspace}
\newcommand{\wt}{\emph{writer}\xspace}
\newcommand{\rl}{\emph{reclaimer}\xspace}
\newcommand{\rrc}{\emph{reader-reclaimer}\xspace}
\newcommand{\wrc}{\emph{writer-reclaimer}\xspace}

\newcommand{\knbr}{\emph{NBR}\xspace}
\newcommand{\ds}{\emph{data structure}\xspace}
\newcommand{\rb}{\emph{retireBag}\xspace}
\newcommand{\lb}{\emph{limboBag}\xspace}
\newcommand{\hw}{\emph{HiWatermark}\xspace}
\newcommand{\lw}{\emph{LoWatermark}\xspace}
\newcommand{\smr}{\emph{safe memory reclamation}\xspace}

\newcommand{\tid}[1]{$T_#1$}
\newcommand{\pred}{\emph{pred}\xspace}
\newcommand{\curr}{\emph{curr}\xspace}

\newcommand{\nbr}{NBR\xspace}
\newcommand{\nbrp}{NBR+\xspace}
\newcommand{\rcu}{RCU\xspace}
\newcommand{\qsbr}{QSBR\xspace}
\newcommand{\debra}{DEBRA\xspace}
\newcommand{\ibr}{IBR\xspace}
\newcommand{\geibr}{2GEIBR\xspace}
\newcommand{\cl}{crystallineL\xspace}
\newcommand{\cw}{crystallineW\xspace}
\newcommand{\hp}{HP\xspace}
\newcommand{\he}{HE\xspace}
\newcommand{\wfe}{WFE\xspace}
\newcommand{\ebr}{EBR\xspace}



\newcommand{\true}{\mbox{\textsf{TRUE}}}
\newcommand{\false}{\mbox{\textsf{FALSE}}}
\newcommand{\Insert}{\op{Insert}}
\newcommand{\DeleteMin}{\op{DeleteMin}}
\newcommand{\DeleteMax}{\op{DeleteMax}}
\newcommand{\Locate}{\op{Locate}}
\newcommand{\Restructure}{\op{Restructure}}
\newcommand{\CombinerDelete}{\op{CombinerDelete}}
\newcommand{\CCSynch}{\op{CCSynch}}
\newcommand{\HigherLevelInsertion}{\op{HigherLevelInsertion}}
\newcommand{\CAS}{\op{CAS}}
\newcommand{\FAO}{\op{FAO}}

\newcommand{\MIN}{\mbox{Min}}
\newcommand{\MAX}{\mbox{Max}}
\newcommand{\f}[1]{\mbox{\textit{#1}}} 
\newcommand{\x}[1]{\mbox{$\mathit{#1}$}} 

\newcommand{\here}[1]{{\bf [[[#1]]]}}  
\newcommand{\remove}[1]{}  
\newcommand{\er}[1]{{\bf [[#1--Eric]]}}
\newcommand{\myedit}[2]{{\color{#1}{#2}}\normalcolor}

\newcommand{\op}[1]{{\sf #1}} 


\newboolean{showcomments}
\setboolean{showcomments}{true} 

\newcommand{\ajcomment}[1]{%
  \ifthenelse{\boolean{showcomments}}%
    {\textcolor{red}{\small \textbf{[AjComment:} #1\textbf{\small ]}}}%
    {}%
}

\newcommand{\aj}[1]{%
  \ifthenelse{\boolean{showcomments}}%
    {\textcolor{blue}{\small \textbf{[Aj:} #1\textbf{\small ]}}}%
    {}%
}

\newcommand{\nikos}[1]{%
  \ifthenelse{\boolean{showcomments}}%
    {\textcolor{brown}{\small \textbf{[nikos:} [[#1]]\textbf{\small ]}}}%
    {}%
}

\newcommand{\nikaj}[1]{%
  \ifthenelse{\boolean{showcomments}}%
    {\textcolor{brown}{\small \textbf{[} [[#1]]\textbf{\small ]}}}%
    {}%
}

\newcommand{\ajedit}[1]{\textcolor{blue}{#1}} 

\begin{abstract}
We present a new blocking linearizable stack implementation which utilizes {\em sharding} and fetch\&increment
to achieve significantly better performance than all existing concurrent stacks. 
The proposed implementation is based on a novel elimination mechanism and a
new combining approach that are efficiently blended to gain high performance.
Our implementation results in enhanced parallelism and low contention when accessing the shared stack.
Experiments show that the proposed stack implementation outperforms all existing 
concurrent stacks by up to 2$\times$ in most workloads. It is particularly efficient in  systems supporting a large number of threads and in high contention scenarios.
\end{abstract}



\keywords{concurrent stacks, elimination, software combining, concurrent data structures}


\maketitle

\section{Introduction}
\label{sec:intro}

Stacks are fundamental data structures utilized in various applications,
as well as in operating systems, and in system software. They support 
the \op{push}, \op{pop}, and \op{peek} operations for managing elements in a Last-In First-Out (LIFO) manner. Concurrent stacks are widely used as building blocks in concurrent pools~\cite{HS08}, shared freelists in garbage collection~\cite{yang2022deep, anderson2021concurrent}, and concurrent graph algorithms~\cite{nguyen2013lightweight}. 
Multiple threads may concurrently attempt to access the stack, which
often requires atomic access to a shared pointer \x{top} pointing to the topmost element of the stack,
thus resulting in heavy cache invalidation traffic. 
Therefore, naive concurrent stack implementations not only do not exhibit  speedup, 
but also they may incur drastic performance degradation. 
To mitigate this sequential bottleneck, several techniques have been proposed in the literature, 
including elimination~\cite[Section 11]{HS08}, software combining~\cite{OTY99,hendler2010flat,fatourou2012revisiting,FK17,KSW18}, 
timestamping~\cite{DHK15}, and combinations of these as elimination can be implemented on top 
of other techniques.

{\em Elimination} leverages the observation that two semantically opposite operations, 
such as a \op{push} and a \op{pop}, can effectively cancel each other out, 
leaving the data structure in a state as if neither occurred. 
This allows operations to complete without accessing the shared data structure, 
thereby reducing cache invalidation traffic. The Elimination-Backoff (EB) stack~\cite{hendler2004scalable}, 
for instance, employs a single elimination array which the threads use to exchange information in pairs
that allows them to discover whether they can eliminate the operations of each other. 
It utilizes an adaptive mechanism to determine the size of the elimination array at each point in time,
which results in good {\em elimination degree} (i.e., in a high number of eliminated operations on average), 
thus reducing contention on accessing the shared stack. 
However, its performance 
has been shown to be slower than that of stack implementations 
based on software combining~\cite{hendler2010flat} and the timestamp-based stacks~\cite{DHK15}.
\remove{\footnote{\textcolor{red}{I should somewhere say that DCE uses elimination proactively 
before accessing shared stack but EB accesses as a backoff. ]
In high concurrency scenarios, failing first at shared stack is potentially bad.}}}
The main overhead comes from the multiple \op{CAS} operations threads need to execute in order to discover whether elimination can occur.

\textit{Software combining}~\cite{OTY99,hendler2010flat,fatourou2012revisiting,FK17,KSW18} aims to reduce 
the synchronization overhead and the number of cache invalidations due to accesses to the shared \x{top} pointer by having a thread at each point in time, known as {\em combiner}, execute a batch of operations on behalf of other threads,
in addition to its own, after acquiring a global lock. Threads that do not act as combiners 
simply spin wait for the combiner to serve their operations and report their return values.
All threads announce their operations on a shared list
and the combiner traverses this list, 
applies the announced operations to the implemented data structure, reports return values, and releases the lock. 
This approach reduces the synchronization overhead at the \x{top} pointer (a contention hot spot), 
thereby cutting down the cache invalidation traffic. However, combining
sometimes unnecessarily reduces parallelism and results also in bottlenecks at high levels of concurrency, as we see in experiments.

The use of fetch\&add has been proved to be the state-of-the-art approach~\cite{FK11spaa,FK14,FKK22,LCRQ13, freudenthal1991process, hendler2010flat} 
for designing fundamental highly-efficient concurrent data structures, in particular FIFO queues~\cite{LCRQ13,roh2025aggregating}. 
For example, \LCRQ, published in PPoPP'13~\cite{LCRQ13}, 
which was recently shown to still be the fastest concurrent queue in~\cite{RK23}, 
is based on the simple idea of using fetch\&increment to implement two counters, 
one for assigning distinct sequence numbers to the \op{Enqueue} operations
and one to the \op{Dequeue} operations, so that the element inserted in the queue 
by the \op{Enqueue} with sequence number $i$ will be eliminated
by the \op{Dequeue} with the same sequence number. This is a natural idea
to follow for designing a FIFO queue as enqueuers insert elements on the one endpoint of the queue 
and dequeuers remove from the other. 

However, using fetch\&increment for designing a simple and efficient concurrent stack
entails a lot of challenges, 
since maintaining the LIFO property requires threads to operate on the same endpoint
to execute either a \op{push} or a \op{pop} operation. 
As a result, unlike queues, efficient use of fetch\&increment has not yet been realized for stack designs.


In this paper, we present a new stack implementation, called \DEC\ (Sharded Elimination and Combining), described in Section 3. \DEC\ efficiently uses fetch\&increment to achieve high performance by integrating elimination and combining in a unified design. These mechanisms share several components, enabling efficient implementation without duplicating costs. The algorithm thus merges elimination and combining seamlessly and introduces a novel design for a blocking linearizable stack.
Our experimental evaluation (\secref{experimentalevaluation}) studies the performance of \DCE\ stack under diverse settings, demonstrating that the new stack consistently outperforms state-of-the-art concurrent stacks. Moreover, the elimination and combining techniques are of independent interest and can be applied in other contexts, 
such as designing efficient concurrent deques or related data structures.

\SEC\ uses sharded elimination and combining 
to avoid the performance bottlenecks and limitations of prior elimination and combining techniques. 
It utilizes multiple aggregators, 
each assigned a subset of threads. 
Threads within each aggregator form batches of \op{push} and \op{pop} operations. 
Within each batch, threads cooperate 
to eliminate and combine operations.
Eventually, a batch is only left with either all \op{push} or all \op{pop} operations, 
which are then applied to the stack.


This batch-level elimination significantly reduces the number of threads 
that ultimately need to access the shared stack, thereby lessening contention at the stack's \x{top} pointer.
Combining within a batch reduces 
contention at the stack's \x{top} pointer
even further.
Our experiments as well as a recent aggregating funnels technique for implementing fetch\&add in software,
published in PPoPP'24~\cite{roh2025aggregating}, demonstrate the advantage of 
partitioning threads to multiple aggregators, akin to sharding
to reduce the overhead of a contention hot spot on large multicore machines.


In \DEC, having a distinct combiner for each batch where \op{push} operations are the majority,
results in increased parallelism, as it allows the combiners to  concurrently create substacks of nodes to append
in the shared stack. 
This also reduces the number of expensive 
synchronization primitives (such as CAS) performed by our algorithm, as each combiner attempts 
to append an entire substack in the shared stack by executing a singe CAS.
Similarly, the combiners of batches where \op{pop} operations are the majority,
attempt to update the \x{top} pointer once (with a single CAS) 
to remove the entire chain of topmost nodes from the stack in order to return them to the non-eliminated 
\op{pop} operations of their batches. 
Thus our approach allows for enhanced parallelism. 
Having a small number of combiners working concurrently differs from classic software combining approaches, 
such as CCSynch~\cite{fatourou2012revisiting} or flat-combining~\cite{hendler2010flat}, 
where a single combiner executes all operations on the stack sequentially.

Another novelty of \SEC\ is that it effectively uses two counters in each batch, 
implemented using fetch\&increment, to significantly simplify and accelerate the elimination and the combining
mechanisms.
A thread announces its operation in a batch
by simply incrementing one of these counters. These counters are later used to figure out
how many and which operations will be eliminated, and which operations will be applied 
to the shared stack. This way, in \SEC, elimination and combining are applied
at a significantly lower cost than in previous algorithms. 
For instance, 
the proposed algorithm requires only two fetch\&increment operations on two separate shared 
counters 
to support elimination, unlike the traditional \EB\ algorithm~\cite{hendler2004scalable}, 
which requires up to three \op{CAS} operations per \op{push} and \op{pop} pair. 
This results in a significant reduction in contention.
The use of counters also allows the algorithm to introduce a simple approach
for the discovery of the return values of non-eliminated \op{pop} operations. 

Through these innovations, our approach achieves better throughput
 without impacting the performance of operations disproportionately 
and demonstrates significant improvement over the existing state of the art algorithms 
on high thread count. 
\SEC\ is blocking because its elimination and combining mechanisms involve waiting. Our experiments suggest that \SEC\ does not have a clear advantage under low contention but it significantly outperforms all competitors at high thread counts.


\noindent
\textbf{Contributions} of this paper are: \\
$\bullet$ A lightweight mechanism for elimination based on thread sharding and the efficient use of
fetch\&increment. This mechanism is of independent interest and could be used to design 
other concurrent data structures (e.g. dequeues). \\ 
$\bullet$ A technique to tie up the elimination mechanism with a highly-efficient combining approach,
which results in enhanced parallelism when executing \op{push} and \op{pop} operations
on the shared stack. These mechanisms  result 
in significantly less contention when accessing the shared stack.\\
$\bullet$ A novel concurrent stack implementation in C++, which incorporates
the proposed elimination and combining techniques in a way that 
they have several components in common to avoid paying certain costs twice.
 \\
$\bullet$ A comprehensive experimental analysis to illustrate that the proposed stack implementation
outperforms all previous concurrent stacks in many cases on large scale NUMA systems.

\remove{

\textbf{Outline.}
The rest of the paper is organized as follows. In \secref{relatedwork}, we discuss state of the art algorithms which aim to mitigate the serialization bottleneck in the stacks. In \secref{algo}, we discuss in detail our algorithm and its code. This is followed by performance evaluation in \secref{experimentalevaluation} and conclusion in \secref{conclusion}.

}

\remove{
Threads form a batch through a \textit{freezing} mechanism. Specifically, every thread attempts to announce its operation within a batch.
\textcolor{blue}{Among them, the threads that announced their push and pop operations first attempt to decide the number of push and pop operations that are going to be the part of the batch. This event is called \textit{freezing} of a batch and the the thread that performs this is called a \textit{freezer} of the batch.
Until the \textit{freezer} decides the number of push and pop operations, all other threads that announced their operation in the batch simply wait for the batch to \textit{freeze}.}

Once a batch is frozen, threads eliminate the concurrently announced push and pop operations, using a simple arithmetic described later\aj{I should use more descriptive phrase than mere later}, leaving the batch with either only push or only pop operations.
The thread that announced the first non eliminated operation of the batch then becomes the \textit{combiner} and applies the batch's operation to the shared stack.
}

\section{Related Work}
\label{sec:relatedwork}

Several techniques in the literature~\cite{dodds2015scalable, hendler2004scalable, hendler2010flat, PZ18, SZ00}  
try to reduce the overhead due to the contention bottleneck when accessing the top pointer 
in stacks.
These leverage either the idea of exponential backoff, elimination, combining, timestamping, 
or a mix of these ideas to reduce the high overhead of synchronizing operations at the top of a stack.

Exponential backoff~\cite[Section 7.4]{HS08} is a well-known technique widely used in several settings. In this approach, a thread that observes contention backs off for a randomly chosen period,
giving competing threads a chance to finish their operations.
The backoff time is adjusted based on the observed contention. 
\DCE\ benefits from a simple backoff scheme which however serves a different goal. 
In \DEC, specific threads, called {\em freezers}, backoff for a small amount of time to increase the possibility of having more operations assigned to each batch, which might result in higher elimination and combining degrees.

Elimination has been proposed in \cite{hendler2004scalable} as an alternative backoff scheme for stacks. 
\EB\ \cite{hendler2004scalable} is a lock-free stack  which implements elimination as backoff
using a single elimination array. \EB's elimination mechanism is pretty heavy. Threads have to execute
three \op{CAS} operations to cooperate for eliminating a pair of operations.
Instead of using elimination as a backoff, \DEC\ proactively attempts to eliminate operations 
before accessing the shared stack. Its elimination mechanism is lightweight and requires only two fetch\&increment
operations to support the elimination of a pair of operations.

Software combining~\cite{OTY99,hendler2010flat,fatourou2012revisiting,FK17,KSW18} is a synchronization technique where a single combiner thread acquires a global lock 
and performs the requests of multiple other threads on their behalf, significantly reducing synchronization overhead 
and cache invalidations~\cite{hendler2010flat}. Software combining has been used to implement shared stacks~\cite{hendler2010flat,fatourou2012revisiting,FK17,KSW18}.
\DCE\ improves upon these stacks
by using multiple aggregators and batches. This results in enhanced parallelism and reduced
contention.

\textit{Timestamping} is utilized in~\cite{dodds2015scalable} to address the issue of a single hot point 
in stacks by exploiting the fact that linearizability allows concurrent operations to be reordered. 
Each element is associated with a timestamp during the execution of the corresponding \op{push} operation. 
This allows threads to push an element without requiring any synchronization on a single \x{top} pointer. 
However, it makes the pop and peek operations costlier, as the pop and peek operations 
are required to scan the timestamped elements and return an element with the highest timestamp from the stack for correctness.
In some of our experiments , threads perform \op{peek} operations to study this shift in overhead. Indeed, TSI~\cite{dodds2015scalable} shows worse performance than \DEC\ on read-heavy workloads.

\SEC\ borrows the idea of dispersing contention through nested partitioning from the aggregating funnels approach in~\cite{roh2025aggregating}, which implements a software Fetch\&Add primitive.
Like aggregating funnels, \DCE\ employs multiple aggregators and batches within each aggregator, but extends the design to implement a stack by introducing a) a novel batch-freezing mechanism, b) a new lightweight elimination scheme, c) a combining strategy supporting two types of operations (\op{push} and \op{pop}), and d) an efficient mechanism to return values to non-eliminated \op{pop} operations.
Thus, it substantially extends the techniques of~\cite{roh2025aggregating} and  applies them to design a concurrent stack that increases parallelism, reduces contention and scales well on large multicore machines.

\section{Algorithm}
\label{sec:algo}




We start with a brief summary of how the algorithm works.
In \AggStack\ 
threads coordinate
in groups to reduce contention and to increase locality. 
This is achieved by utilizing multiple aggregators. Every thread is assigned to a fixed aggregator structure. 
Threads coordinate in the aggregator they have been assigned to. 
The way aggregators are used in \AggStack\ provides efficient implementations
for both an elimination scheme and a combining mechanism. 

The first goal of a thread executing an operation \op{op} is to discover
if \op{op} can be eliminated. To accomplish this, the operations invoked by the threads of an aggregator are split
into batches. Elimination occurs among the operations of the same batch. This 
requires a {\em freezing} mechanism to determine the operations that belong to each batch, as well as an array
within a batch where a \op{push} and a \op{pop} operation may meet and exchange information. 
If an exchange indeed occurs, the pair of \op{push} and \op{pop} operations is eliminated. 

The operations of a batch that are not eliminated, which are all of the same type
(either all \op{push} or all \op{pop} operations), 
are applied on a shared stack that \AggStack\ employs. To reduce contention, at most one {\em combiner} thread
may exist in each batch at each point in time. The role of the \textit{combiner} is to execute the non-eliminated operations 
of the batch on behalf of the threads that have initiated them, which perform
local spinning waiting for the \textit{combiner} to signal them that their operations have been served.
To parallelize the execution of \op{push}
operations among batches and reduce the number of expensive synchronization primitives (e.g., CAS)
performed on the state of the shared stack, the \textit{combiner} of each batch creates a local substack with 
the elements it wants to \op{push} in the shared stack. Then, it attempts to append the local
substack to the shared stack by executing CAS. 
Similarly, a \textit{combiner} that wants to pop a number of operations from the stack
attempts to pop them all together by atomically changing the top pointer of the stack 
to point to the appropriate subsequent stack node.

Section~\ref{sec:agg-descr} provides a detailed overview of the algorithm.
Section~\ref{sec:agg-details} presents the details of \AggStack\ pseudocode.

\subsection{Description of the Algorithm}
\label{sec:agg-descr}

\noindent
$\bullet$ {\bf Aggregators and batches.}
The algorithm uses $K$ {\em aggregators} and the shared stack.
Each thread is assigned to a particular aggregator. 
We denote by $P$ the maximum number of threads assigned to each aggregator.
Threads within an aggregator compete with one another.
The operations initiated by the threads assigned to an aggregator $A$ are split into batches. 
This way an aggregator employs more than one batch (possibly an infinite number of batches, if the execution is infinite). 
A {\em batch} is an object that enables threads to synchronize and cooperate in order to perform elimination and combining.
Each aggregator $A$ is implemented as a struct that contains a single pointer to its currently active batch. 
\\
\noindent
$\bullet$ {\bf Elimination and freezing.}
Among the operations that belong to a batch $B$ of an aggregator $A$,
if the number of \op{push} (\op{pop}) operations is fewer 
than the number of \op{pop} (\op{push}) operations, all the \op{push} (\op{pop}) operations of $B$ will be eliminated by \op{pop} (\op{push}) operations. 
Eventually, the batch will be left with only one type of operations that will be applied to the shared stack.
To implement elimination, the threads assigned to $A$ need a mechanism to decide the set $S$ of active operations that belong to 
$A$'s currently active batch $B$.
After $S$ has been decided, the active threads eliminate the biggest possible number of operations in $S$.

The threads decide which operations belong to $B$ through a {\em freezing} mechanism.
Specifically, $B$ stores two counters, namely \textit{pushCount} and \textit{popCount}.
To execute a \op{push} (or \op{pop}) operation, a thread announces the operation in $B$ by performing fetch\&increment on \textit{pushCount} (or \textit{popCount}).
Thus, \textit{pushCount} and \textit{popCount} represent the number of \op{push} and \op{pop} operations, respectively, that threads assigned to $A$ {\em announce} while $B$ is $A$'s active batch.
The value a thread sees in the counter it accesses is the {\em sequence number} of its active operation. 
The same sequence number can only be assigned to two operations of opposite type.

The threads that announce the first \op{push} and the first \op{pop} operations in  $B$ compete to decide which of them will become the {\em freezer} thread $f_B$ of $B$.
Thread $f_B$ will undertake the task of freezing $B$, i.e., it will indicate a point in time 
after which any subsequently announced operations will not belong to $B$.
A thread with sequence number $i$ that announced its operation into $B$ records the value it intends to \op{push} in the $i$-th element of an array, called \textit{eliminationArray} (stored in $B$).
It then spins, waiting for $f_B$ to freeze $B$.
A thread records the value to be pushed in \textit{eliminationArray} just after it obtains its sequence number.
This has the following advantages.
First, it prevents a thread performing a \op{pop} operation from waiting for the value to become available for exchange.
Second, it prevents the batch combiner from having to wait for the value.

\remove{
\st{A thread with sequence number $i$ that announced its operation into $B$ in the meantime, 
records its identifier into the $i$-th element of an array, called \textit{eliminationArray} (which is stored into $B$)
in an effort to speed up the process of its potential elimination after freezing, and then performs spinning
waiting for $f_B$ to freeze $B$.}
}

$B$ stores into counters \textit{pushCountAtFreeze} and \textit{popCountAtFreeze} 
the actual number of \op{push} and \op{pop} operations that belong to the batch,
i.e., the number of \op{push} and \op{pop} operations that have been recorded in \textit{pushCount} and \textit{popCount} by the time of the freeze.
Additionally, $B$ uses a boolean, called \textit{isFreezerDecided}, to choose $f_B$. 
\textit{isFreezerDecided} is initially FALSE.
The threads competing to become the freezer of $B$ attempt to set \textit{isFreezerDecided} to TRUE using Test\&Set. 
The winning thread will play the role of $f_B$
storing \textit{pushCount} and \textit{popCount} into \textit{pushCountAtFreeze} and \textit{popCountAtFreeze}.

Then, $f_B$ replaces the current working batch of the aggregator $A$ with a new one by changing $A$'s batch pointer to point to a new batch $B'$.
Any newly arriving thread will be assigned to $B'$ or to a subsequent batch.
With this pointer change, $f_B$ also signals the rest of the threads announced into $B$ to stop spinning.
Among them, those that have been announced after freezing and therefore  do not belong to $B$,
will attempt to be assigned to $B'$ or to a subsequent batch.
Each of the rest of the threads will attempt the elimination of their operation 
through \textit{eliminationArray}. 
Consider any such thread and let \op{op} be its operation which has sequence number $i$.
If there is another operation belonging to $B$ with sequence number $i$, 
then it should be of the opposite type for the elimination to occur. 
Otherwise, \op{op} has to be applied on the shared stack.
Threads whose operations are eliminated may immediately start the execution of new operations.
\\
\noindent 
$\bullet$ {\bf Combining.} Among the threads of a batch $B$ whose operations are not eliminated,
one plays the role of the combiner of $B$ undertaking the task of applying
all the non-eliminated operations of $B$
to the shared stack. This way a small number of threads 
may access the shared stack concurrently, resulting in reduced contention and enhanced performance. 

$B$'s combiner thread  is chosen to be the thread executing the first non-eliminated operation of $B$. Each batch
has a single combiner. To parallelize the execution of \op{push} operations by combiners of different batches,
each combiner creates a (local) substack containing all the non-eliminated \op{push} operations
in its batch (let them be $m$). Then, the combiner attempts to append the whole substack atomically to the shared stack by performing 
\op{CAS} on the top pointer of the stack. If the \op{CAS} is successful, all $m$ operations
are realized with a single \op{CAS}. This results in improved performance. 

Similarly, a combiner that has to perform $m$ \op{pop} operations will attempt to remove $m$ stack nodes atomically using \op{CAS}
to change the top pointer to point to the $m$-th node after it.
All threads whose operations are not eliminated and do not act as combiners perform local spinning waiting for the combiner 
of their batch to inform them that their operations have been applied on the shared stack. 
The combiner uses a boolean variable, called \textit{isBatchApplied}, 
stored into $B$, to signal the waiting threads to stop spinning and returns a substack to them to facilitate the discovery of their response values.
Specifically, a thread $q$ whose operation has sequence number $i$ reads the $i$-th node of the substack and returns its value if such a node exists else $q$ returns \texttt{EMPTY}.

To enhance performance, the freezer thread $f_B$ executes a short backoff before freezing $B$ (before \lineref{backoffspot})
to increase the {\em elimination degree} of \SEC, i.e., the average number of operations
that are eliminated, as well as its {\em combining degree}, i.e., the average number of operations that 
a combiner will serve. Experiments showed that this results in enhanced performance.

\subsection{Pseudocode and Details}
\label{sec:agg-details}

\begin{figure}
    \begin{algorithmic}[1]
    \LeftComment{Shared stack related variables}
    \State \texttt{\textbf{Struct} Node} \label{lin:typeNode}
    \Statex \quad \texttt{int} \f{value}  
    \Statex \quad \texttt{Node*} \f{next}
    \State \texttt{Node* stackTop} \label{lin:typeMainTop} 
    \State \texttt{int tid} \Comment{thread local id} \label{lin:tid}

    \LeftComment{Freezing, Elimination, and Combining related variables}
    \State \texttt{\textbf{Struct} Batch} \label{lin:typeBatch}
    \Statex \quad \texttt{int} \texttt{pushCount, popCount}  
    \Statex \quad \texttt{int} \texttt{pushCountAtFreeze, popCountAtFreeze} 
    \Statex \quad \texttt{Node*} \texttt{eliminationArray [P]} 
    \State \quad \texttt{Node*} \texttt{subStackTop}
    \Statex \quad \texttt{bool} \texttt{isFreezerDecided} 
    \Statex \quad \texttt{bool} \texttt{isBatchApplied} 
    \State \texttt{\textbf{Struct} Aggregator} \label{lin:typeAggregator}
    \Statex \quad \texttt{Batch* batch} 
    \State \texttt{Aggregator agg[K]} 
\end{algorithmic}
\caption{Key data structures and variables }
\label{fig:types}
\vspace{-0.5cm}
\end{figure}

\noindent
{\bf Data Structures and Variables.}
\Figref{types} shows the data structures and variables used in \AggStack.
The shared stack is implemented as a linked list of \texttt{Node}  (line~\ref{lin:typeNode}), 
each storing an integer \f{value} and a pointer to the next node. A shared \x{stackTop} pointer (line~\ref{lin:typeMainTop})
points to the stack's topmost node. 

The algorithm uses an array \x{agg} of $K$ \textit{Aggregator} objects (line~\ref{lin:typeAggregator}), 
each storing a pointer \f{batch} to its currently active batch. 
A batch (line~\ref{lin:typeBatch}) stores the four counters 
\f{pushCount}, \f{popCount}, \f{pushCountAtFreeze}, and \f{popCountAtFreeze},
as well as the boolean variables \f{isFreezerDecided} and \f{isBatchApplied}.
It also stores \f{eliminationArray}, the array that the operations of the batch use to eliminate one another. 

\begin{algorithm*}
    \caption{Push Algorithm.
    }
    \label{algo:push}
    \begin{algorithmic}[1]

        \Function{\texttt{Push}}{\texttt{int value}}: void \label{lin:orige-procread}
            \State \texttt{myAgg $\gets$ agg[tid/K]} \label{push:get-aggregator}
            \State \texttt{Node* myNode $\gets$ New Node(value, $\mathit{null}$)} \label{lin:allocatenode} \label{push:allocnode} 
            \While{\texttt{TRUE}} \label{lin:pushloopbegin}
                \State \texttt{myBatch $\gets$ myAgg.batch} \label{push:get-batch}
                \State \texttt{mySeqNum $\gets$ fetch\&increment(myBatch.pushCount)} \label{push:F&A}
                \State \texttt{myBatch.eliminationArray[mySeqNum] $\gets$ myNode}\label{push:write-value} 
                
                \If{\texttt{(mySeqNum} \texttt{==} \texttt{0} \texttt{\&}\texttt{\&} \texttt{!T\&S(myBatch.isFreezerDecided))}} \label{push:Freezer-Check} \Comment{freezing block}
                \State  \Call{FreezeBatch}{\texttt{myAgg, myBatch}} \label{push:FreezeBatch}
                \Else
                    \While{\texttt{myBatch == myAgg.batch}} \texttt{nop} \EndWhile \label{lin:pushNonFreezerWait} \Comment{non freezers wait for freezer to freeze the working batch}
                \EndIf \label{push:Freezer-Check-End}
                \medskip
                \If{\texttt{mySeqNum} \texttt{$<$} \texttt{myBatch.pushCountAtFreeze}} \label{lin:pushIncludedInBatch} \Comment{inclusion test.}
                    \If{\texttt{mySeqNum} \texttt{$\ge$} \texttt{myBatch.popCountAtFreeze}}  \label{push:elimination-check}\Comment{elimination test.}
                        \If{\texttt{mySeqNum} \texttt{==} \texttt{myBatch.popCountAtFreeze}}\label{lin:combiningBlockBegin} \Comment{combiner test}
                                \State \Call{PushToStack}{\texttt{myBatch, mySeqNum}} \label{push:push-substack-main}
                                \State \texttt{myBatch.isBatchApplied $\gets$ TRUE}  \label{push:release-all}
                        \Else 
                            \While{\texttt{!myBatch.isBatchApplied}} \texttt{nop} \label{lin:pushNonCombinerWait} \EndWhile 
                        \EndIf\label{lin:combiningBlockEnd}
                        \EndIf  \label{push:endInclusion}
                    \State \Return \texttt{TRUE} \label{lin:pushReturn}
                \EndIf \label{lin:pushNotIncludedInBatch}
        \EndWhile \label{lin:pushloopend}
    \EndFunction              
  
\medskip
    \Function{\texttt{FreezeBatch}}{\texttt{Aggregator myAgg, Batch* myBatch}} \label{lin:procfreeze}
        \State \texttt{myBatch.popCountAtFreeze $\gets$ myBatch.popCount} \label{lin:backoffspot} 
        \State \texttt{myBatch.pushCountAtFreeze $\gets$ myBatch.pushCount} 
        \State \texttt{myAgg.batch $\gets$ \Call{CreateNewBatch}{0,0,0,0,{0}, $\bot$, FALSE, FALSE}} \label{line:LPEliminatedOps} \label{lin:nonFreezerRelease} 
    \EndFunction   

\medskip

    \Function{\texttt{PushToStack}}{\texttt{Batch* myBatch, int MySeqNum}}
        \State \texttt{Node *top=$\bot$, *bot=$\bot$} 
        \State \texttt{int i $\gets$ MySeqNum} 
	\State {bot $\gets$ myBatch.eliminationArray[MySeqNum]} \label{lin:pushTopBotInit} 
        \While{\texttt{++i < batch.pushCountAtFreeze}}\label{push:addToSubStackBeg} \Comment{prepare a substack from \op{push} operations in myBatch}
            \While{\texttt{!myBatch.eliminationArray[i]}} \label{lin:waitPushNotAnn}
            \texttt{nop} 
            \EndWhile 
            \State \texttt{tempNode $\gets$ myBatch.eliminationArray[i]} \label{push:readToTemp}
            \State \texttt{tempNode.next $\gets$ top} \label{push:setTempNext}
            \State \texttt{top $\gets$ tempNode} \label{push:changeTopToTemp}
        \EndWhile \label{push:addToSubStackEnd}
        \While{\texttt{TRUE}} \label{push:addToSharedStackBeg} \Comment{Add the substack to the shared stack}
            \State \texttt{Node* tempTop $\gets$ stackTop}
            \State \texttt{bot.next $\gets$ tempTop} 
            \If{\texttt{CAS(stackTop, tempTop, top)}} \label{line:pushlp}
            \State \Return 
            \EndIf
        \EndWhile \label{push:addToSharedStackEnd}
    \EndFunction  
\algstore{myalgorithm}
\end{algorithmic}
\end{algorithm*}

\begin{algorithm*}
    \caption{Pop Algorithm}
    \label{algo:pop}
    \begin{algorithmic}[1]
    \algrestore{myalgorithm}

    \Function{\texttt{Pop}}{ }: int
        \State \texttt{myAgg $\gets$ agg[tid/K]} \label{pop:get-aggregator}
        \While{\texttt{TRUE}} \label{lin:poploopbegin}
            \State \texttt{myBatch $\gets$ myAgg.batch} \label{pop:get-batch}
            \State \texttt{mySeqNum $\gets$ fetch\&increment(myBatch.popCount)} \label{pop:F&A}
            \If{\texttt{(mySeqNum} \texttt{==} \texttt{0} \texttt{\&}\texttt{\&} \texttt{!T\&S(myBatch.isFreezerDecided))}} \Comment{freezing block} \label{pop:Freezer-Check}
            \State \Call{FreezeBatch}{\texttt{myAgg, myBatch}} \label{pop:FreezeBatch}
            \Else
                \While{\texttt{myBatch == myAgg.batch}} \texttt{nop} \EndWhile \label{pop:popNonFreezerWait}
            \EndIf \label{pop:Freezer-Check-End}
            \medskip
            \If{\texttt{mySeqNum} \texttt{$<$} \texttt{myBatch.popCountAtFreeze}} \Comment{inclusion test} \label{pop:popIncludedInBatch}
                \If{\texttt{mySeqNum} \texttt{$<$} \texttt{myBatch.pushCountAtFreeze}} \label{pop:Elimination-Check}\Comment{elimination test.}
                    \While{\texttt{!myBatch.eliminationArray[mySeqNum]}} \texttt{nop} \label{pop:Elimination-Wait} 
                    \EndWhile 
                    \State \Return \texttt{{myBatch.eliminationArray[mySeqNum].value}} \label{pop:Elimination-Return}
                \EndIf
                \medskip

                \If{\texttt{mySeqNum} \texttt{==} \texttt{myBatch.pushCountAtFreeze}} \label{pop:Combiner-Check}
                    \State \Call{PopFromStack}{\texttt{myBatch, mySeqNum}} \label{pop:PopFromSharedStack}
                    \State \texttt{myBatch.isBatchApplied $\gets$ TRUE}  \label{pop:Release-nonCombiners}
                \Else
                    \While{\texttt{!myBatch.isBatchApplied}} \texttt{nop} \label{pop:Wait-Combiner} 
                    \EndWhile
                \EndIf \label{pop:Combiner-Check-End}

            \State \texttt{\Return GetValue(myBatch,mySeqNum - myBatch.pushCountAtFreeze)}
            \EndIf \label{pop:endIncluded}
        \EndWhile \label{lin:poploopend}
    \EndFunction
\medskip
    \Function{\texttt{PopFromStack}}{Batch * \x{myBatch}, int \x{MySeqNum}}
        \State \texttt{int i $\gets$ MySeqNum} 
        \While{\texttt{TRUE}} \label{popAtomicRemoveLoopBeg}
            \State{\texttt{Node *top  $\gets$ stackTop}}
            \State{\texttt{Node *bot $\gets$ stackTop }}
	     \While{\texttt{++i < batch.popCountAtFreeze}}\label{pop:popFromStackBeg} 
            \If{\texttt{bot == null}} {\textit{break}} \EndIf
            \State{\texttt{bot $\gets$ bot.next}}
            \EndWhile
            \If{\texttt{CAS(stackTop, top, bot)}} \texttt{{break}} \label{line:poplp} 
            \EndIf
        \EndWhile \label{popAtomicRemoveLoopEnd}
        \State \texttt{myBatch.subStackTop $\gets$ top} 
    \EndFunction

\medskip

  \Function{\texttt{GetValue}}{\texttt{Batch *myBatch, int mySeqNum}}
        \State{\texttt{temp $\gets$ myBatch.subStackTop}}
        \For{\texttt{mySeqNum times}}
            \If{\texttt{temp == null}} {\Return \texttt{EMPTY}}
            \EndIf
            \State{\texttt{temp $\gets$ temp.next}}
        \EndFor
        \State \texttt{\Return temp.value}
\EndFunction

\end{algorithmic}
\end{algorithm*}

\noindent
{\bf Pseudocode for Push.}
A thread $q$ executing a \op{push} operation (\algoref{push}) first fetches the aggregator
it is assigned to based on its \textit{tid}  at line \ref{push:get-aggregator}.
In \SEC, threads are evenly distributed across aggregators (but more sophisticated schemes are also possible~\cite{roh2025aggregating}). For example, with two aggregators and ten threads, the first aggregator serves the first five threads and the second  the remaining five.
At line~\ref{lin:allocatenode}, $q$ allocates a new node with a \x{value} to be pushed and its \f{next} field equal to null . 
Then, the thread repeatedly does the following (loop of lines~\ref{lin:pushloopbegin}-\ref{lin:pushloopend}). 
It fetches a pointer to the aggregator's currently active \f{batch} $B$ into its local variable \x{myBatch}. 
It then performs a fetch\&increment on $B$'s \f{pushCount} (line~\ref{push:F&A})  thereby announcing its operation, 
and stores the returned value, which serves as  $q$'s sequence number in \x{mySeqNum}; 
$q$ uses its sequence number to determine the slot  to access in $B$'s \f{eliminationArray}
(line~\ref{push:write-value}).

Lines~\ref{push:Freezer-Check}-\ref{push:Freezer-Check-End} correspond to the freezing mechanism. 
Thread $q$ checks if it has received the sequence number $0$ among the threads that increment $B$'s \textit{pushCount}
(first condition of the if statement of line~\ref{push:Freezer-Check}). 
If $q$'s sequence number is indeed $0$, $q$ will execute the Test\&Set of line \ref{push:Freezer-Check}
to avoid a potential race with the thread executing the first announced \op{pop} in $B$, which might also attempt to become the freezer.
If the Test\&Set succeeds (i.e., returns $0$), $q$ sets $B$’s \f{isFreezerDecided} flag to \texttt{TRUE} and undertakes the role of the freezer $f_B$.
If $q$ does not get $0$ as its sequence number, it 
performs spinning (line~\ref{lin:pushNonFreezerWait}) waiting for $f_B$ to complete
the freezing phase by changing $B$'s \textit{batch} pointer to point to a new batch.


The freezer thread $f_B$ executes the FreezeBatch function (lines \ref{lin:procfreeze}-\ref{lin:nonFreezerRelease}), 
which stores a copy of \textit{pushCount} and \textit{popCount}
into \textit{pushCountAtFreeze} and \textit{popCountAtFreeze}, respectively. 
Thread $f_B$ also allocates a new batch $B'$ 
and sets the aggregator's \textit{batch} pointer to point to $B'$ (line~\ref{lin:nonFreezerRelease}).
This informs the other threads that freezing is complete.

At line \ref{lin:pushIncludedInBatch}, $q$ checks 
if its operation \op{op} belongs to $B$ by comparing \op{op}'s sequence number
with the value $f_B$ recorded into $B$'s \textit{pushCountAtFreeze}.
If \op{op}'s sequence number is greater than or equal to \textit{pushCountAtFreeze}, 
this means that $q$ incremented \textit{pushCount} after $f_B$ read 
\textit{pushCount}, so it does not belong to $B$ and has to retry 
applying its operation using a later batch 
(while loop at line~\ref{lin:pushloopbegin}). 

A thread $q$ whose \op{push} operation \op{op} belongs to $B$ executes the body of the if statement at line~\ref{lin:pushIncludedInBatch}. 
At line \ref{push:elimination-check}, $q$ checks if $\mathit{op}$ can be eliminated. 
If \op{op}'s sequence number is smaller to the number of \op{pop} operations
that belong to $B$, then \op{op} can be eliminated, so $q$ simply returns TRUE
at line~\ref{lin:pushReturn}. 
Otherwise, \op{op} cannot be eliminated, so $q$ executes lines~\ref{push:push-substack-main}-\ref{lin:combiningBlockEnd}.
At line~\ref{lin:combiningBlockBegin}, $q$ checks whether it should become a combiner
by comparing \x{mySeqNum} with \f{popCountAtFreeze}. 
If there are more \op{push} operations belonging to $B$ than pop, then 
the thread for which \x{mySeqNum} equals \f{popCountAtFreeze} becomes the combiner,
i.e., the combiner is the thread executing the \op{push}  operation with the smallest sequence number 
among those that have not been eliminated.
The combiner executes the \op{PushToStack} routine (line \ref{push:push-substack-main}) to apply the non-eliminated push
operations on the shared stack. 
The rest of the threads spin on line \ref{lin:pushNonCombinerWait}
waiting for the combiner to signal them that their operations have been applied
by setting the \textit{isBatchApplied} flag to TRUE (line \ref{push:release-all}).

The \op{PushToStack} routine is executed only by one thread at a time, namely the combiner at that time. 
\op{PushToStack} takes as arguments a pointer to $B$ and the sequence number \x{mySeqNum} of the combiner's operation. By the way the combiner is chosen,
all \op{push} operations with sequence numbers between \x{mySeqNum} and \f{pushCountAtFreeze} must be applied on the shared stack. 
The combiner creates a substack containing the nodes created by each of these operations (lines~\ref{push:addToSubStackBeg}-\ref{push:addToSubStackEnd}).
The last node of the substack is then linked to the current top node of the shared stack (lines~\ref{push:setTempNext}-\ref{push:changeTopToTemp}). 
This is done by tracking the bottom-most node of the substack in variable \x{bot}
(line~\ref{lin:pushTopBotInit}), and its top-most node in variable \x{top} (line~\ref{push:changeTopToTemp}).
During the procedure of creating the substack, the combiner might have to wait (line~\ref{lin:waitPushNotAnn}) 
until the thread that has been assigned the sequence number $i$ records its node in \x{eliminationArray}. 
Once written, the node is read at line~\ref{push:readToTemp} and
linked to the top of the substack at line~\ref{push:setTempNext}. Then, \x{top} is updated 
to this newly linked node at line~\ref{push:changeTopToTemp}.

The combiner repeatedly executes lines~\ref{push:addToSharedStackBeg} to~\ref{push:addToSharedStackEnd} 
until it successfully updates \x{stackTop}. This involves reading the current value of \x{stackTop}, 
linking the bottom-most node of the substack to it, and attempting a \op{CAS} to set \x{stackTop}
to be a pointer to the substack's topmost node. If the \op{CAS} succeeds, the function returns; otherwise, it retries.

\noindent
{\bf Pseudocode for Pop.} The announcement and freezing parts for a \op{pop} operation are similar to that for a push.
Specifically, a thread $q$ executing  a \op{pop} operation fetches the aggregator $A$
it is assigned to (line~\ref{pop:get-aggregator}) and $A$'s currently active batch $B$ (line~\ref{pop:get-batch}).
Then, it performs a fetch\&increment on $B$'s \f{popCount} (line~\ref{push:F&A})  to announce its operation, 
and uses the return value, which is stored into \x{mySeqNum}, 
as its sequence number. 
Lines~\ref{pop:Freezer-Check}-\ref{pop:Freezer-Check-End} correspond to the freezing code,
which is the same as for \op{push}.

At line~\ref{pop:popIncludedInBatch}, $q$ checks if it belongs to $B$ 
by comparing its sequence number with the value of \x{popCountAtFreeze}. 
If $q$'s sequence number is greater than or equal to that value,
$q$ must retry in a later batch.
If $q$ belongs to $B$, it checks at line~\ref{pop:Elimination-Check} 
whether its operation \op{op} can be eliminated. 
If \op{op} can be eliminated, $q$ spins at line~\ref{pop:Elimination-Wait} 
until the \op{push} operation that will eliminate it 
writes its node into \x{eliminationArray}[\x{mySeqNum}]. 
Once the node is available, $q$ completes the execution of \op{op} returning the value recorded in the node.

If \op{op} cannot be eliminated, $q$  checks whether it should become a combiner (line~\ref{lin:combiningBlockBegin}).
If there are more \op{pop} operations belonging to $B$ than \op{push}, then 
the thread executing the \op{pop} operation with the smallest sequence number 
among those that have not been eliminated becomes the combiner.
The combiner executes the PopFromStack routine to apply the pop
operations that are not eliminated on the shared stack. 
The rest of the threads 
wait for the combiner to signal them that their operations have been applied.

The combiner applies all \op{pop} operations 
with sequence numbers between \x{mySeqNum} and \x{popCountAtFreeze} on the shared stack. 
In \op{PopFromStack}, the combiner first records the current value of \x{stackTop} into the local variable \x{top},
and then uses the local variable \x{bot} to traverse as many stack nodes as the number of non-eliminated \op{pop} operations,
provided that the stack contains enough nodes. 
The combiner then executes a \op{CAS} to update \x{stackTop} to point to the same node as \x{bot}. 
If the \op{CAS} succeeds, the function returns; otherwise, the combiner retries the steps above.

To find its response, a thread $q$ that executes a non-eliminated \op{pop} operation calls \op{GetValue}.
In \op{GetValue}, threads traverse the removed substack to discover their return values based on their
sequence numbers. The thread with sequence number $\x{SeqNum} + i$ (where \x{SeqNum} is 
the sequence number of the combiner)
receives the $i$-th node from the removed substack, if it exists; otherwise, it returns null.

{\bf Peek.} We omit the pseudocode for \op{peek} operation, as it is simply a read of \x{stackTop}, similar to the Treiber stack~\cite{Treiber86}. 

The discussion on correctness and reclamation is provided in the supplementary material.

\section{Reclamation}
Here, we briefly describe how we can integrate a reclamation algorithm into our stack implementation.
In \DEC, we deploy Brown’s implementation of the epoch-based reclamation algorithm (DEBRA) to reclaim batches and stack nodes~\cite{brown2015reclaiming}. Other reclamation algorithms~\cite{anderson2021concurrent, singh2021nbr, sheffi2021vbr, singh2025publish, kim2024your, singh2023simple, nikolaev2024family} can be applied in the same way.
In \op{pop}, a shared stack node is retired to the reclamation algorithm when a non-eliminated waiting thread retrieves its node from the substack returned by a combiner.
A batch is retired either by a freezer after elimination, when the batch is empty because all its operations were eliminated, or by a combiner when all its operations have been applied to the shared stack.

\section{Correctness and Progress}
\label{sec:correctness}
Our \AggStack\ algorithm is a linearizable stack implementation. 
A detailed proof of correctness appear in the appendix.

Consider an operation \op{op} that belongs to a batch $B$ of an aggregator $A$.
Let $q$ be the thread that initiated \op{op}.

If \op{op} is eliminated, let $\op{op}'$ be the operation that eliminated \op{op}. 
Recall that \op{op} and \op{op}' are operations of opposite type with the same sequence number. Moreover, both \op{op} and \op{op}' must have read the same batch. 
Let \op{op} is a push and \op{op}' is a pop (the reverse case is symmetric). 
Let $t$ and $t'$ be the times at which \op{op} and \op{op}' apply their fetch\&increment on 
\f{B.pushCount} and \f{B.popCount}, respectively, to announce their operations. 
Operation \op{op} writes its values at the slot represented by \x{mySeqNum} at line \ref{push:write-value} and \op{op}' reads the value returned by \op{op} at line \ref{pop:Elimination-Return}. 
We linearize both $\op{op}$ and $\op{op}'$ at the latest of $t$ and $t'$, the moment in time when $\op{op}$ and $\op{op}'$ eliminate each other.

Assume now that \op{op} was not eliminated. Then, there is a combiner thread $c_B$ for $B$,
which applied \op{op} ($c_B$ might be $q$ or a thread executing some other operation that belongs to $B$).
If \op{op} is a \op{push} operation, we linearize it at the point that $c_B$ successfully executes the \op{CAS} of line 
~\ref{line:pushlp}
in \op{PushToStack}. Similarly, if \op{op} is a \op{pop} operation, 
we linearize it at the point that $c_B$ successfully executes the \op{CAS} of line 
~\ref{line:poplp}
in \op{PopFromStack}. If more than one operation is linearized at the same point, we break ties
using  sequence numbers; operations with smaller sequence numbers are linearized first.

\begin{property}
\AggStack\ stack is blocking
\end{property}
In \AggStack, threads wait for freezer and combiner threads to perform their respective operations. 
In addition, threads executing a \op{pop} may wait for a \op{push} during elimination (line~\ref{pop:Elimination-Wait}).
Hence, the algorithm is blocking.

\section{Experimental Evaluation}
\label{sec:experimentalevaluation}

\noindent
{\bf Experimental Setting.}
The experiments were conducted on an Intel's Emerald Rapids (\textbf{Emerald}) machine, as well as on an Ice Lake-SP (\textbf{IceLake}) machine.  
Emerald consists of 2 NUMA nodes with 12 cores each, supporting 2-way hyperthreading for a total of 56 hardware threads. 
IceLake has 4 NUMA nodes with 12 cores each, also with 2-way hyperthreading, for a total of 96 hardware threads.
\SEC\ is implemeted in C++.
Our benchmarks, compiled with \texttt{C++14} and \texttt{-O3} optimization, run on Ubuntu 24.04 
using the mimalloc allocator and an epoch based reclamation algorithm Debra~\cite{brown2015reclaiming} to reclaim in \SEC. 

We have also conducted experiments on a 192 thread Intel Sapphire Rapids machine. 
The plots for the IceLake and Sapphire Rapids appear in the Section~\ref{sec:icelakeextexp} and~\ref{sec:rapidsextexp} and follow similar trends across all machines.


\noindent
{\bf Tested Algorithms.}
We test \SEC\ and compare its performance with five well-known stack implementations, namely the Treiber's stack ({\bf \TRB})~\cite{Treiber86}, 
the elimination backoff stack (\EB)~\cite{hendler2004scalable},
the flat-combining based stack (\FC)~\cite{hendler2010flat}, 
the CC-Synch based stack (\CCElim)~\cite{fatourou2012revisiting},
and the interval variant of the timestamp-based stack (\textbf{\TSI}) from \cite{dodds2015scalable}
(which is the most well performed variant presented in \cite{dodds2015scalable}). 

The implementations of \FC\ and \CCElim\ were taken from the benchmark framework 
provided with \CCElim\ \cite{fatourou2012revisiting}, and the implementations of \EB\ and \TSI\ from the benchmark in \cite{dodds2015scalable}.
We used the default settings from the \TSI\ benchmark. We also experimented with other settings (e.g., other {\em delay values}),
but the performance of the algorithm did not turn out to be better in these settings than in the default setting. 
\\
\\
\noindent
{\bf Methodology.} 
Each experiment runs for 5 seconds on a stack initially prefilled with $1000$ nodes. 
Similarly, varying the prefill size had no significant impact on performance.
During execution, threads randomly perform \op{push}, \op{pop}, or \op{peek} operations, with values drawn uniformly from a specified range. 
We report {\em throughput} (millions of operations per second)
averaged over five runs
across varying thread counts for three workload types: Read-heavy (90\% \op{peek}, 5\% \op{push}, 5\% \op{pop}; \textit{i.e.,} 10\% updates), Mixed (50\% \op{peek}, 25\% \op{push}, 25\% \op{pop}; \textit{i.e.,} 50\% updates), and Update-heavy (50\% \op{push}, 50\% \op{pop}; \textit{i.e.,} 100\% updates). 
We experimented with and without thread pinning and noticed no significant performance differences, therefore, we report results without pinning. Adding random work between operations did not also make any difference in the performance trends we observed.

On Emerald and IceLake, the system is oversubscribed after 56 and 96 threads, respectively.
Unless stated otherwise, our experiments with \SEC\ use two aggregators, with threads evenly distributed across them, as this configuration yielded the best performance.
For \SEC, the variance in throughput was below 5\% across all experiments.
\\
\\
\noindent
{\bf Analysis.}
\Figref{exp1} shows \SEC's throughput across the aforementioned workloads against the other competing stacks.


On Emerald (\Figref{exp1m1}),  
\SEC\ achieves up to 1.8-2.5$\times$ higher throughput than \FC\ and \CCElim\ across all workloads.
\FC\ and \CCElim\ restrict 
parallelism by executing entire \op{push} and \op{pop} operations sequentially.
This reduces the synchronization cost incurred at the shared top pointer 
but quickly becomes a bottleneck. 
In contrast, the combiners in \SEC\ execute (parts of) the operations they serve
concurrently, serializing only when accessing the top pointer.
Thus,  they execute shorter critical sections. This results in enhanced parallelism, and thus in better performance.

\begin{figure*}
	\centering
	\begin{minipage}{\textwidth}
		\begin{subfigure}{\textwidth}
			\includegraphics[width=0.33\linewidth, keepaspectratio]{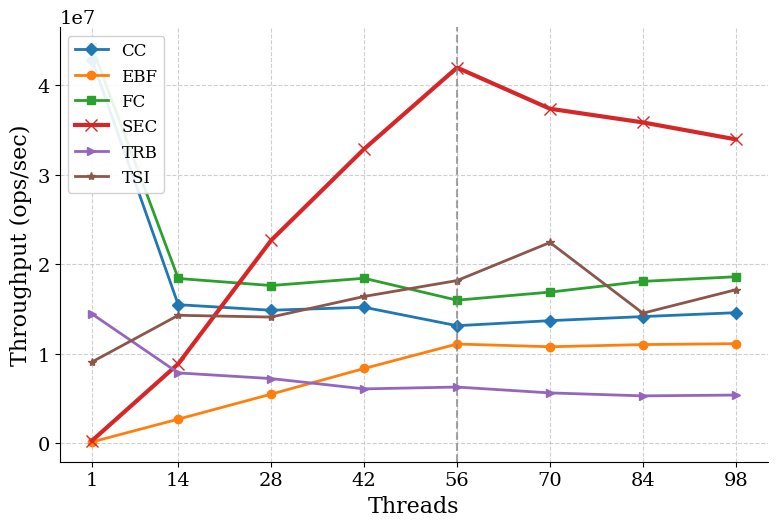}\hfill
			\includegraphics[width=0.33\linewidth, keepaspectratio]{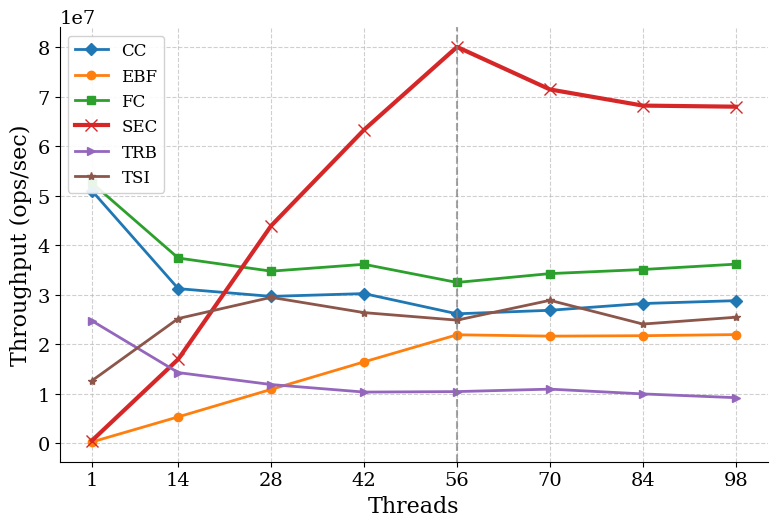}\hfill
			\includegraphics[width=0.33\linewidth, keepaspectratio]{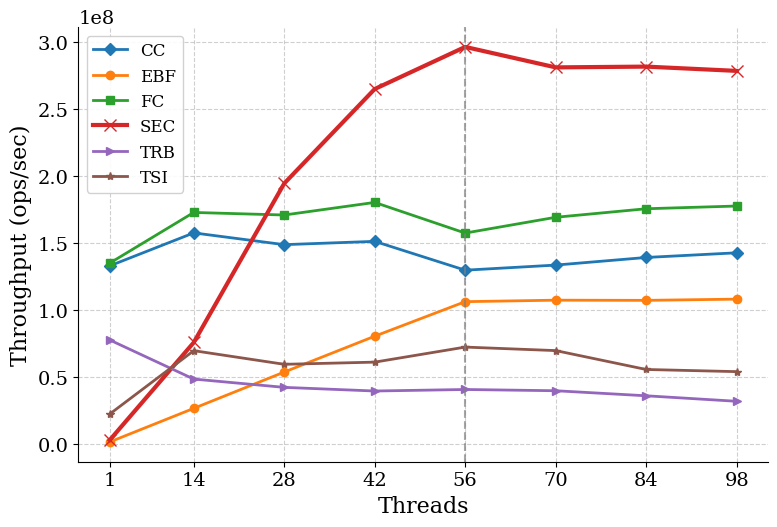}\hfill
			\caption{Throughput on Emerald. System is oversubscribed after 56 threads.}
			\label{fig:exp1m1}
		\end{subfigure}
		\begin{subfigure}{\textwidth}
			\includegraphics[width=0.33\linewidth, keepaspectratio]{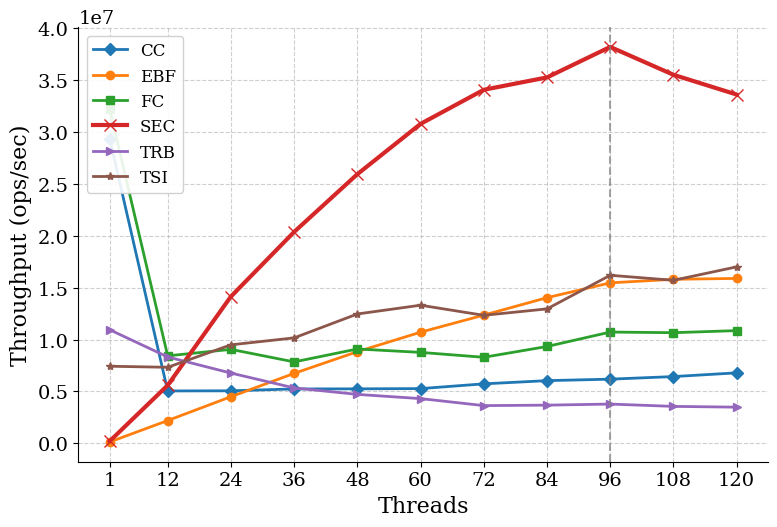}\hfill
			\includegraphics[width=0.33\linewidth, keepaspectratio]{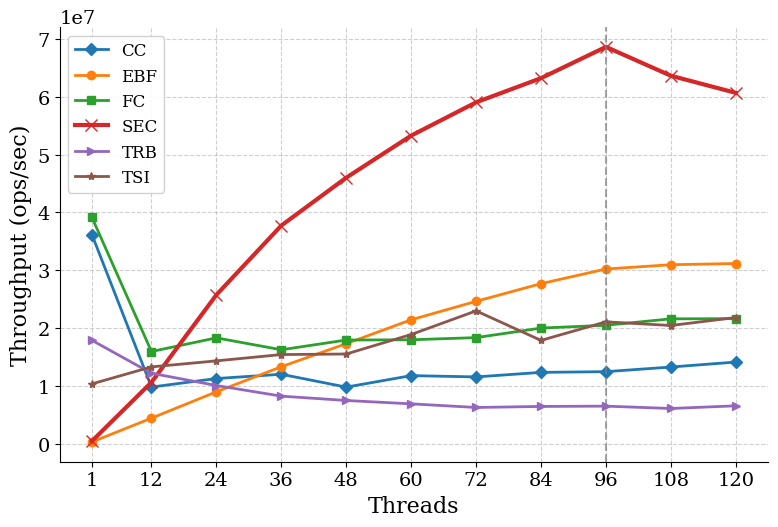}\hfill
			\includegraphics[width=0.33\linewidth, keepaspectratio]{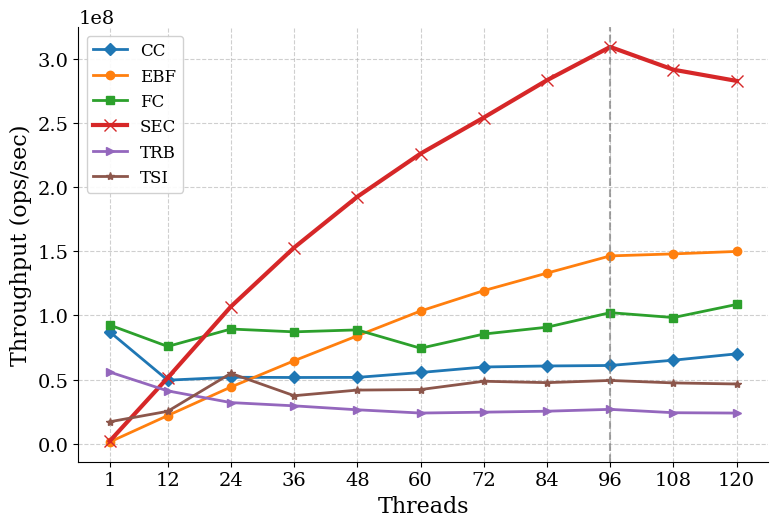}\hfill
			\caption{Throughput on IceLake. System is oversubscribed after 96 threads.}
			\label{fig:exp1m2}
		\end{subfigure}
	\end{minipage}
	\caption{Throughput. (Left) 100\% updates. (Middle) 50\% updates. (Right) 10\%updates. Y-axis: throughput in millions of operations per second. X-axis: \#threads. Number of aggregators used is two.}
	\label{fig:exp1}
\end{figure*}

\begin{figure*}
	\includegraphics[width=0.49\linewidth, keepaspectratio]{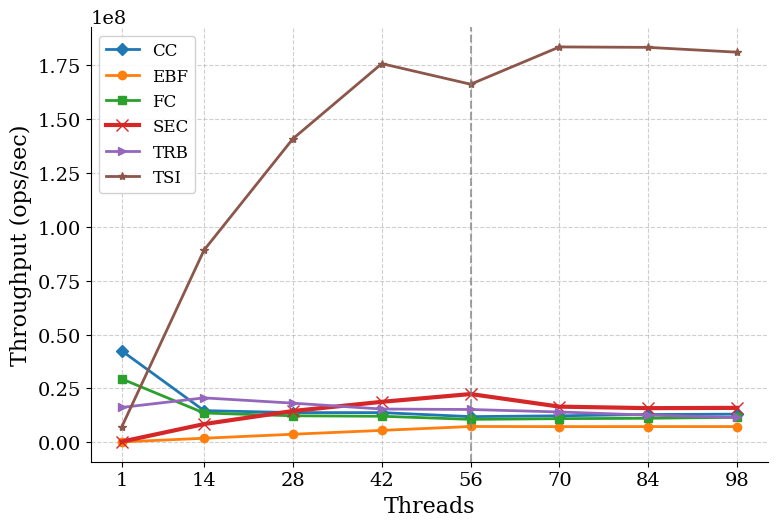}\hfill
	\includegraphics[width=0.49\linewidth, keepaspectratio]{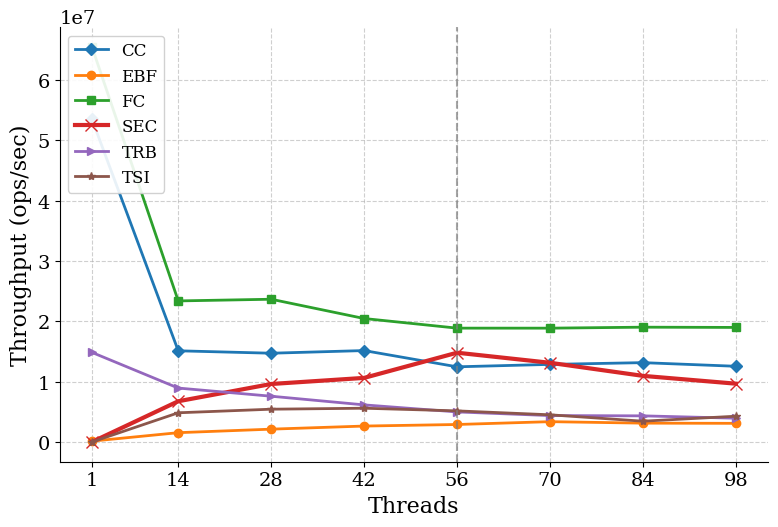}\hfill
	\caption{Throughput for push-only and pop-only workloads on Emerald. (Left) Push only. (Right) Pop only. Y-axis: throughput in millions of operations per second. X-axis: \#threads. Number of aggregators used is two.}
	\label{fig:exp2m1}
\end{figure*}

\begin{figure*}
	\centering
	\includegraphics[width=0.249\linewidth, keepaspectratio]{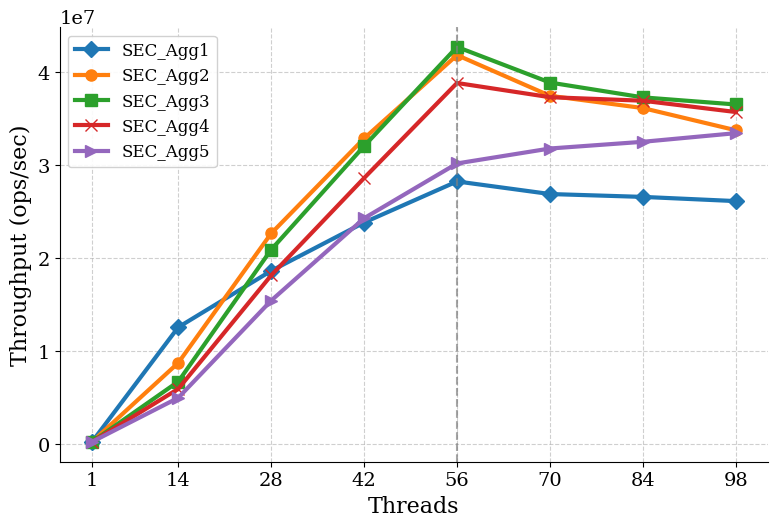}\hfill
	\includegraphics[width=0.249\linewidth, keepaspectratio]{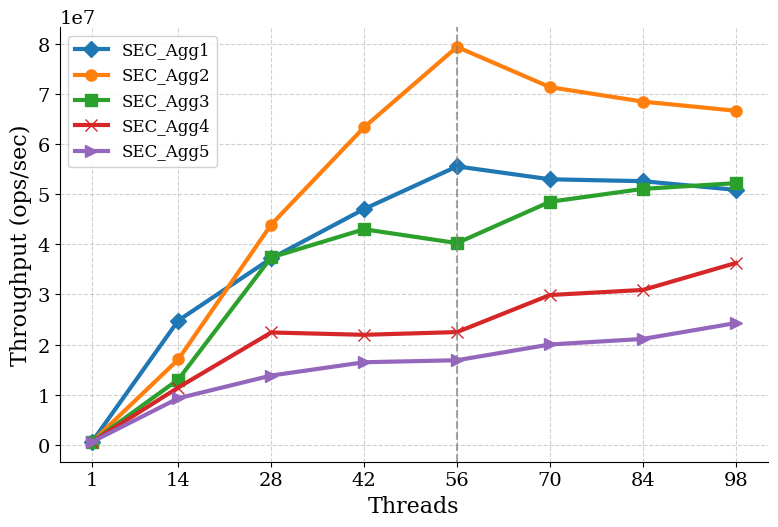}\hfill
	\includegraphics[width=0.249\linewidth, keepaspectratio]{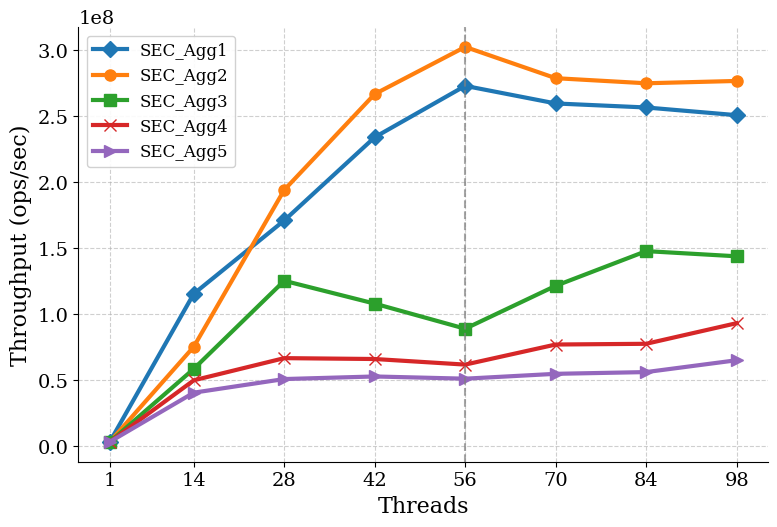}\hfill
	\includegraphics[width=0.249\linewidth, keepaspectratio]{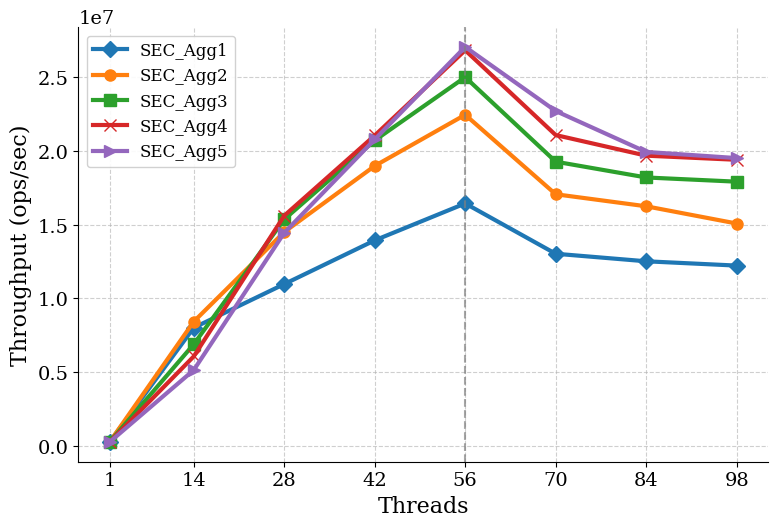}\hfill
	\caption{Comparing SEC throughput with various number of aggregators on Emerald. From left to right, 100\% updates, 50\% updates, 10\%updates, 100\%push-only. Y-axis: Throughput. X-axis: \#threads. SEC with 1 aggregator is labeled as SEC\_Agg1.}
	\label{fig:exp3m1}
\end{figure*}


In \Figref{exp1m1}, we see that 
\SEC\ outperforms \TSI, especially at large thread counts.
\TSI's performance in comparison to \SEC\ degrades as the update rate decreases (50\% and 10\%). 
This is due to the high overhead of \op{pop} and \op{peek} operations in \TSI.
To achieve synchronization-free push operations, 
\TSI\ shifts  overhead onto \op{pop} and \op{peek} operations. 
At 100\% update rate (50\% pushes and 50\% pops), the balanced mix of \op{push} and \op{pop} operations
allows the synchronization-free \op{push} operations in \TSI\ to offset the cost of \op{pop} operations, boosting performance. 
By contrast, at 50\% and 10\% update rates, the larger proportion of \op{pop} and \op{peek} operations 
increases overhead reducing performance.

On IceLake (\Figref{exp1m2}), \SEC\ is up to 2-2.3$\times$ faster than \TSI\ and other competitors,
scaling better across all workloads.
\TSI\ uses x86-specific RDTSCP instruction to define time intervals for efficient elimination.
To achieve this, it introduces delays between start and end of the time intervals that increases latency and negatively impacts \TSI's performance.

Among the other algorithms only \EB\ scales well. This is due to its elimination-based backoff mechanism.
However, it remains up to 2.6$\times$ slower than \SEC.
Its slower throughput stems from the fact that it employs a heavier elimination mechanism than \SEC. 
\EB\ requires three \op{CAS} to achieve elimination, whereas \SEC\ deploys a faster elimination mechanism 
based on fetch\&increment and employs sharding which reduces contention.  
Moreover, \SEC\ deploys its elimination mechanism proactively, whereas \EB\ uses elimination only as a fallback.
Also, the way the elimination array is used in \EB\ may result in pairs of operations
that can be eliminated but they are not, thus limiting the elimination degree of the approach. 
For instance, 
a \op{push} may wait on some slot of the elimination array without being eliminated although
there might be concurrent \op{pop} operations, which however have chosen other slots in the array.
On the contrary, in \SEC, the use of the \x{PushCounter} and \x{PopCounter}
ensures that the elimination degree is optimal within each batch.

\Figref{exp2m1} focuses on push-only and pop-only workloads on the Emerald machine.
Experiments for IceLake show similar performance trends and are provided in the Section ~\ref{sec:icelakeextexp}. 
These experiments exhibit the throughput of the tested algorithms in the absence of elimination 
and highlight techniques that impose asymmetric overhead on stack operations.
For example, \TSI\ has fast \op{push} operations but incurs high overhead for \op{pop} operations, whereas the other techniques, including \SEC, deliver similar performance for both \op{push} and \op{pop} operations.
Specifically, \TSI\ is up to 6$\times$ faster than \SEC\ in the push-only benchmark but \SEC\ is up to 3$\times$ faster than \TSI\ in the pop-only benchmark.
\TSI\ is faster in push-only workload because its push operations avoid synchronization at the shared top pointer by inserting  nodes into thread-local pools.
However, this design shifts the overhead to \op{pop} and \op{peek} operations, 
which becomes evident in the pop-only workload where \TSI\ incurs a significant slowdown due to the disproportionate overhead on \op{pop} operations.   

\Figref{exp3m1} shows the impact of the number of aggregators on the performance of \SEC\ where we vary the number of aggregators from 1 to 5. 
We denote a configuration with $m$ aggregators as `Agg$m$'. For example, \SEC\ with 1 aggregator is labeled \SEC\_Agg1.

In push-only workloads (rightmost plot in \Figref{exp3m1}), a higher number of aggregators is preferable (two or more),
as it better distributes the contention among threads: first through the aggregators and then through the batches within each aggregator.
As a result, the overhead of \textit{freezing} and \textit{combining} is spread across multiple groups of threads improving performance. Note that in this workload no elimination is possible.

For workloads with 100\% update rate (leftmost plot in \Figref{exp3m1}) two to four aggregators are preferable because in this setting threads are able to disperse contention between multiple aggregators and yet are able to take advantage of combining and elimination. 
On the other side, 
using five aggregators disperses contention between threads too much such that chances of threads taking advantage  elimination and combining reduces. For one aggregator, the opposite happens, where overhead of thread contention due to \textit{freezing} and \textit{combining} gets concentrated in a single aggregator.
For 50\% and 10\% update rate (middle plots in \Figref{exp3m1}) having one or two aggregator achieves the sweet spot for threads to be able to disperse the contention and yet be able to take advantage of combining and elimination.


We have chosen two aggregators for running all our workloads as it turns out to be the best setting in most cases.

\section{Conclusion}
\label{sec:conclusion}

We introduced SEC (Sharded Elimination and Combining), a linearizable concurrent stack that unifies elimination and combining through nested sharding. By partitioning threads into aggregators and then into batches, \SEC\ reduces contention at the shared stack and enables multiple combiners to execute in parallel, departing from traditional flat-combining approaches.
Key to efficiency in \SEC\ is its lightweight use of fetch\&increment counters, which simplify both elimination and combining and eliminate the need for costly synchronization in the common case. 
This incurs low overhead compared to prior algorithms such as EB, CCSynch, and flat-combining, while maintaining good performance.

Our evaluation shows that SEC consistently outperforms state-of-the-art stacks across diverse workloads and machines. Beyond stacks, the novel sharded elimination and efficient combining are of independent interest and can be applied to other concurrent data structures, such as deques. 




\begin{acks}
Supported by the Greek Ministry of Education, Religious Affairs and Sports call SUB 1.1 -- Research Excellence Partnerships (Project: HARSH, code: Y$\Pi$ 3TA-0560901), implemented through the National Recovery and Resilience Plan Greece 2.0 and funded by the European Union -- NextGenerationEU.
The authors thank the anonymous reviewers for their thoughtful feedback.
\end{acks}


\bibliographystyle{ACM-Reference-Format}
\bibliography{references}

\appendix
\clearpage
\onecolumn

\section{Artifact Description}
\label{sec:aedescp}
This section provides a step by step guide to run our artifact in a docker container. The artifact extends the popular Brown's setbench benchmark publicly available at: \url{https://gitlab.com/trbot86/setbench}.

The artifact can be found at the following links:
\begin{itemize}
    \item zenodo (most recent version v4 with docker image): \\
    \url{https://zenodo.org/records/18109078}.
    \item github (repo without docker image):\\
    \url{https://github.com/ConcurrentDistributedLab/combXAgg}
\end{itemize}

If you prefer to use the artifact directly without using the docker container please refer to the accompanying README file in the source code.


The following instructions will help you load and run the provided Docker image within the artifact downloaded from Zenodo link.
Once the docker container starts you can use the accompanying README file to compile and run the experiments in the benchmark.
\\
\\
\myparagraph{Steps to load and run the provided Docker image:}

Note: Sudo permission may be required to execute the following instructions. 

\begin{enumerate}
    \item 
    Install the latest version of Docker on your system. We tested the artifact with the Docker version 28.3.0, build 38b7060 on Ubuntu 24.04. Instructions to install Docker may be found at \url{https://docs.docker.com/engine/install/ubuntu/}. Or you may refer to the “Installing Docker” section at the end of this README.

    To check the version of docker on your machine use:
\begin{verbatim}
$ docker -v
\end{verbatim}

    \item Download the artifact from Zenodo at URL:\\
    \url{https://zenodo.org/records/18109078}.

    \item Extract the downloaded folder and move to \\
    \textit{sec\_setbench/} directory using $cd$ command.
    \item Find docker image named \textit{sec\_setbench.tar.gz}\\
    in sec\_setbench/ directory. And load the downloaded docker image with the following command:
\begin{verbatim}
 $ sudo docker load < sec_setbench.tar.gz
\end{verbatim}
\item 
Verify that image was loaded:
\begin{verbatim}
 $ sudo docker images
\end{verbatim}
\item Start a docker container from the loaded image:
\begin{verbatim}
 $ sudo docker run -it --rm sec_setbench
\end{verbatim}
\item Invoke $ls$ to see several files/folders of the artifact: Dockerfile, README.md, common, ds, install.sh, lib, microbench, sec\_experiments, tools.
\end{enumerate}

Now, to compile and run the experiments you could follow the instructions in the README file.

\section{Correctness and Progress}
\label{sec:correctness}

We begin by stating a sequence of observations derived from the pseudocode, followed by a proof that \AggStack\ is linearizable.

Consider now an operation \op{op} that belongs to a batch $B$ of an aggregator $A$, and let $q$ be the thread that initiated \op{op}.
	
\begin{observation}
	Each batch of $A$ has exactly one freezer.
\end{observation}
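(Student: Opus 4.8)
The plan is to reduce the statement to counting successful Test\&Set operations on \texttt{B.isFreezerDecided}. First I would observe, directly from the pseudocode, that a thread becomes the freezer of $B$ if and only if it executes the \texttt{T\&S} on \texttt{B.isFreezerDecided} (line~\ref{push:Freezer-Check} or line~\ref{pop:Freezer-Check}) and this \texttt{T\&S} returns \texttt{FALSE}: this is the unique precondition under which a thread proceeds to call \texttt{FreezeBatch} (line~\ref{push:FreezeBatch}, line~\ref{pop:FreezeBatch}), and \texttt{FreezeBatch} (lines~\ref{lin:procfreeze}--\ref{lin:nonFreezerRelease}) is precisely the code that freezes $B$. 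So it suffices to show that exactly one \texttt{T\&S} on \texttt{B.isFreezerDecided} returns \texttt{FALSE}.

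For the upper bound (at most one freezer) I would argue as follows. The \texttt{T\&S} on \texttt{B.isFreezerDecided} is guarded by the test \texttt{mySeqNum == 0}, so it is performed only by a thread that obtained the value $0$ from \texttt{fetch\&increment(B.pushCount)} (line~\ref{push:F&A}) or from \texttt{fetch\&increment(B.popCount)} (line~\ref{pop:F&A}). Since both counters equal $0$ when $B$ is created and \texttt{fetch\&increment} is atomic, at most one thread obtains $0$ from each of them, and a thread accesses $B$ at most once (after a freeze, \texttt{A.batch} never points to $B$ again, and batch addresses are not reused). Hence at most two threads ever perform the \texttt{T\&S} on \texttt{B.isFreezerDecided}; since that flag starts at \texttt{FALSE} and \texttt{T\&S} is atomic, at most one of these operations returns \texttt{FALSE}.

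For the lower bound (at least one freezer) I would use the fact that \op{op} belongs to $B$, so its initiating thread $q$ passed the inclusion test for $B$ (line~\ref{lin:pushIncludedInBatch} or line~\ref{pop:popIncludedInBatch}): it read \texttt{B.pushCountAtFreeze} (resp.\ \texttt{B.popCountAtFreeze}) strictly greater than its own sequence number, which is nonnegative. These two fields are $0$ in a freshly created batch (\texttt{CreateNewBatch}, line~\ref{lin:nonFreezerRelease}) and are written only inside \texttt{FreezeBatch}; moreover, a non-freezer thread reaches its inclusion test only after \texttt{A.batch} has been advanced past $B$, since it spins on line~\ref{lin:pushNonFreezerWait} (resp.\ line~\ref{pop:popNonFreezerWait}) until then, and that advance also happens only inside \texttt{FreezeBatch}, after both freeze counters have been set. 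Therefore \texttt{FreezeBatch} was executed for $B$, which by the reduction above means that some thread's \texttt{T\&S} on \texttt{B.isFreezerDecided} returned \texttt{FALSE}. Together with the upper bound, this gives ``exactly one''.

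The step I expect to be the main obstacle is the lower bound --- ruling out that an operation can ``belong to'' a batch that was never frozen. This hinges on two details of the code: \texttt{FreezeBatch} sets \texttt{pushCountAtFreeze} and \texttt{popCountAtFreeze} \emph{before} it advances \texttt{A.batch}, and a newly allocated batch has both of these fields equal to $0$, so the inclusion test cannot succeed before \texttt{FreezeBatch} has run. In a full write-up I would package this as a small preliminary invariant; the remainder of the argument follows immediately from the atomicity of \texttt{fetch\&increment} and \texttt{Test\&Set} and from batch objects not being reused.
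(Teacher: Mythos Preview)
Your argument is correct and, for the upper bound, is exactly the paper's: at most one \op{push} and at most one \op{pop} obtain sequence number $0$ on $B$, and the atomic \texttt{Test\&Set} on \texttt{B.isFreezerDecided} lets only one of these two candidates through to \texttt{FreezeBatch}. Your proof is in fact more careful than the paper's, which simply states ``only one can succeed'' and concludes ``exactly one'' without separately justifying the lower bound; you extract that from the premise that \op{op} \emph{belongs to} $B$, i.e.\ passed the inclusion test, which forces \texttt{FreezeBatch} to have already executed on $B$. That extra step is sound and worth keeping.
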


This follows directly from the pseudocode.
At most two threads can compete to become the \textit{freezer}: one executing a \op{push} and one executing a \op{pop}. Both operations must have sequence number $0$, i.e., they are the first of their type in the batch. Only one can succeed in its \texttt{Test\&Set} (lines~\ref{push:Freezer-Check} and~\ref{pop:Freezer-Check}) and subsequently invoke \func{freezebatch}. Hence, each batch has exactly one freezer.

\begin{observation} 
A combiner applies either all push or all pop operations on the stack.
\end{observation}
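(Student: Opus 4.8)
The plan is to argue by analyzing who becomes a combiner and what operations the combiner touches. First I would recall the freezing setup: each batch $B$ has a unique freezer $f_B$ (previous observation), which records \f{pushCountAtFreeze} and \f{popCountAtFreeze} atomically-in-effect before redirecting the aggregator pointer, so after freezing these two values are fixed constants. Let $a = \f{pushCountAtFreeze}$ and $b = \f{popCountAtFreeze}$. Every operation that belongs to $B$ has a sequence number $i$; a \op{push} with $i < a$ belongs to $B$ and is eliminated iff $i < b$ (line~\ref{push:elimination-check}), while a \op{pop} with $i < b$ belongs to $B$ and is eliminated iff $i < a$ (line~\ref{pop:Elimination-Check}). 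Thus exactly $\min(a,b)$ pushes and $\min(a,b)$ pops get eliminated in matched pairs, and the surviving non-eliminated operations of $B$ are either the pushes with $b \le i < a$ (when $a > b$) or the pops with $a \le i < b$ (when $b > a$) — never a mix.

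Next I would identify the combiner. In the \op{push} code the combiner test (line~\ref{lin:combiningBlockBegin}) fires for the thread with $\x{mySeqNum} = b = \f{popCountAtFreeze}$, and this thread is only reached if it passed the inclusion test $i < a$ and the elimination test $i \ge b$; both hold precisely when $a > b$. Symmetrically, in the \op{pop} code the combiner test (line~\ref{pop:Combiner-Check}) fires for the thread with $\x{mySeqNum} = a = \f{pushCountAtFreeze}$, reachable only when $b > a$. Hence at most one of the two conditions can hold (and if $a = b$ neither does, meaning there is no combiner — the batch is fully eliminated), so a batch either has a push-combiner or a pop-combiner, never both.

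Finally I would check the body of each combiner routine to confirm it touches only one type. \op{PushToStack} iterates $i$ from \x{MySeqNum}${}+1$ up to \f{pushCountAtFreeze}, reading only \f{eliminationArray} slots of \op{push} operations, assembling a substack, and doing a single \op{CAS} on \x{stackTop} to splice it on — no \op{pop} is ever applied. Dually, \op{PopFromStack} walks \x{bot} down \f{popCountAtFreeze}${}-$\x{MySeqNum} stack nodes and does a single \op{CAS} detaching exactly those nodes; it then publishes the detached substack in \f{subStackTop} so the waiting non-eliminated poppers retrieve their values via \op{GetValue} — again no \op{push} is applied. Since the combiner is unique per batch and is of exactly one type, and its routine applies only operations of that type, the observation follows.

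The main obstacle, if any, is the bookkeeping corner cases rather than the core argument: one must confirm that $a = b$ indeed yields no combiner and that the batch is then legitimately retired with all operations eliminated (handled by the freezer, as noted in the reclamation section), and that the off-by-one indexing in the combiner loops exactly covers the non-eliminated range $[\min(a,b), \max(a,b))$ with no gap or overlap. These are routine to verify from the pseudocode but deserve an explicit sentence so the reader sees the counting is tight.
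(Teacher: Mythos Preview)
Your proposal is correct and follows essentially the same approach as the paper: a three-way case split on whether \f{pushCountAtFreeze} exceeds, is exceeded by, or equals \f{popCountAtFreeze}, concluding that the non-eliminated operations of a batch are all of one type and that the combiner (if any) is therefore of that single type. Your version is more detailed---you additionally walk through \op{PushToStack} and \op{PopFromStack} to confirm each routine applies only its own operation type, and you flag the off-by-one bookkeeping---but the structure of the argument matches the paper's.
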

This follows from the way threads announce their operations on a batch’s \textit{eliminationArray} from left to right.
Three cases can arise:
\begin{enumerate}
	\item A batch has more pushes than pops. In this case, after elimination, only \op{push} operations remain. The thread executing the first non-eliminated push becomes the combiner.
	\item A batch has more more pops than pushes. In this case, after elimination, only \op{pop} operations remain. Thhe thread executing the first non-eliminated pop becomes the combiner.
	\item A batch has equal number of pushes and pops. In this case, after elimination, the batch is empty and no operations remain to be executed. Precisely, in push, threads fail the check at line~\ref{push:elimination-check} and in pop, threads fail the check at line~\ref{pop:Elimination-Check}.
	Hence, combiner is not invoked.
\end{enumerate}

\begin{observation}
	Each batch of $A$ has exactly one combiner. 
\end{observation}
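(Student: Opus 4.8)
The plan is to derive the claim from the preceding observation, which splits a frozen batch $B$ into three cases according to whether $B$ has more \op{push} than \op{pop} operations, more \op{pop} than \op{push}, or equally many. First I would isolate the single arithmetic fact that does all the work: because \texttt{fetch\&increment} returns pairwise distinct values to the threads that invoke it on \f{pushCount}, the \op{push} operations that belong to $B$ are exactly those whose sequence number lies in $\{0,1,\dots,\f{pushCountAtFreeze}-1\}$, with exactly one operation per value; symmetrically for \op{pop} with $\{0,1,\dots,\f{popCountAtFreeze}-1\}$. In particular, for every $c<\f{pushCountAtFreeze}$ there is a unique \op{push} of $B$ whose sequence number equals $c$, and likewise on the \op{pop} side. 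This makes the phrase ``the thread whose sequence number equals the opposite count-at-freeze'' unambiguous.

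Next I would handle the two non-trivial cases. Suppose $B$ has strictly more \op{push} than \op{pop} operations, i.e. $\f{popCountAtFreeze}<\f{pushCountAtFreeze}$. The only point at which a thread can be designated combiner in \algoref{push} is the combiner test at \lineref{combiningBlockBegin}, which requires \x{mySeqNum} to equal \f{popCountAtFreeze}. I would check that the \op{push} thread with this sequence number does reach that line: it clears the inclusion test at \lineref{pushIncludedInBatch} since $\f{popCountAtFreeze}<\f{pushCountAtFreeze}$, and it enters the non-eliminated branch of line~\ref{push:elimination-check} since $\f{popCountAtFreeze}\ge\f{popCountAtFreeze}$; and it is unique, by the arithmetic fact. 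No other \op{push} of $B$ becomes combiner: those with larger sequence number reach the combiner test, fail it, and spin at \lineref{pushNonCombinerWait}, while those with smaller sequence number are eliminated and return at \lineref{pushReturn}. No \op{pop} of $B$ becomes combiner either: every such \op{pop} has sequence number strictly below $\f{popCountAtFreeze}<\f{pushCountAtFreeze}$, so it takes the elimination branch at line~\ref{pop:Elimination-Check} and returns at line~\ref{pop:Elimination-Return} before ever reaching the \op{pop} combiner test at line~\ref{pop:Combiner-Check}. Hence $B$ has exactly one combiner, and it executes a \op{push}. The case $\f{pushCountAtFreeze}<\f{popCountAtFreeze}$ is fully symmetric, exchanging the roles of \op{push}/\op{pop} and of the corresponding lines; in the degenerate sub-cases where one of the two counts-at-freeze is zero (no \op{pop}, resp. no \op{push}, belongs to the batch) the unique combiner coincides with the freezer, so the count is still exactly one. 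Finally, when $\f{pushCountAtFreeze}=\f{popCountAtFreeze}$, the preceding observation already shows the batch is empty after elimination and no thread reaches a combiner test, so the statement is understood for batches retaining at least one non-eliminated operation (equivalently, one may read it as ``at most one combiner'').

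I expect the main obstacle to be purely the control-flow bookkeeping: proving that in each case a single thread both \emph{reaches} and \emph{satisfies} the combiner predicate, while every thread of the opposite type is demonstrably intercepted by the elimination branch before it can get there, and that no additional non-eliminated thread of the same type slips past. Once the ``distinct and contiguous sequence numbers'' fact is pinned down, the remainder is a mechanical walk through the nested \texttt{if} conditions of \algoref{push} and \algoref{pop}, with no real surprises.
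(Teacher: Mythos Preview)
Your proposal is correct and follows essentially the same approach as the paper: both hinge on the fact that \texttt{fetch\&increment} returns pairwise distinct sequence numbers, so at most one thread of the surviving type can satisfy the combiner predicate $\x{mySeqNum}=\f{popCountAtFreeze}$ (resp.\ $\f{pushCountAtFreeze}$). Your treatment is in fact more thorough than the paper's---you explicitly verify existence (the designated thread reaches and passes the relevant tests), rule out opposite-type threads, and flag the $\f{pushCountAtFreeze}=\f{popCountAtFreeze}$ edge case that the paper's ``exactly one'' phrasing glosses over.
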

	Threads, perform freezing, elimination and combining in that order. Combining is required only if operations remain within a batch after elimination. These remaining operations must be either all \op{push}or all \op{pop}.
	
	Without loss of generality, consider a batch containing only \op{push} operations. The combining step must be performed by the thread executing the first non-eliminated \op{push} in the batch.
	Assume that the \x{pushCount} variable, incremented using $fetch\&increment$, does not wrap around.
	Within a batch, $fetch\&increment$ guarantees that no two threads obtain the same sequence number.
	During freezing, \x{pushCountAtFreeze} is assigned a unique sequence number.
	
	Consequently, during the combining test at line~\ref{lin:combiningBlockBegin}, only one thread's sequence number can match the batch's \x{popCountAtFreeze}. Hence, only one thread can enter the combining block at line~\ref{lin:combiningBlockBegin} in Algorithm~\ref{algo:push}.
	
	The argument is symmetric for a \op{pop} operation: only one thread's sequence number can match the batch's \x{pushCountAtFreeze}, and thus only one thread can enter the combining block at line~\ref{pop:Combiner-Check} in Algorithm~\ref{algo:pop}.

\begin{lemma}
	The linearization point of every non-eliminated operation lies within its execution interval.
\end{lemma}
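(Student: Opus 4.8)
The goal is to show that for a non-eliminated operation $\op{op}$ initiated by thread $q$ belonging to batch $B$ of aggregator $A$, the linearization point assigned in Section~\ref{sec:correctness} — the successful \op{CAS} of line~\ref{line:pushlp} (for \op{push}) or line~\ref{line:poplp} (for \op{pop}) executed by the combiner $c_B$ of $B$ — occurs strictly between the invocation and the response of $\op{op}$. Since the linearization point is a single event performed by $c_B$ (possibly $q$ itself, possibly another thread), the argument splits into showing it happens (i) after $q$ invokes $\op{op}$ and (ii) before $q$ returns from $\op{op}$. I will treat the \op{push} case; the \op{pop} case is symmetric, using line~\ref{line:poplp} and \op{PopFromStack} in place of line~\ref{line:pushlp} and \op{PushToStack}.

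For the \emph{lower bound}, observe that before the combiner can begin executing \op{PushToStack} it must have read a non-$\bot$ value at every relevant slot of \f{eliminationArray} from \x{mySeqNum} up to \f{pushCountAtFreeze} (the wait loop at line~\ref{lin:waitPushNotAnn}); in particular it must read the slot written by $q$ at line~\ref{push:write-value}. But $q$ writes that slot only after it has invoked $\op{op}$, obtained its aggregator and batch, and performed its fetch\&increment on \f{pushCount} (lines~\ref{push:get-aggregator}--\ref{push:write-value}). Hence the combiner's first \op{CAS} attempt at line~\ref{line:pushlp}, which is even later, strictly follows the start of $\op{op}$. (If $q$ is itself the combiner the claim is immediate, since the \op{CAS} is executed inside $\op{op}$.) One subtlety to handle carefully here: we must argue that $q$'s slot write really is the one the combiner reads, i.e.\ that $q$'s sequence number lies in the combiner's iteration range $[\x{mySeqNum}_{c_B}, \f{pushCountAtFreeze})$; this follows because $\op{op}$ is non-eliminated (so $q$'s sequence number is $\ge \f{popCountAtFreeze} = \x{mySeqNum}_{c_B}$, from the elimination test at line~\ref{push:elimination-check} and the combiner test at line~\ref{lin:combiningBlockBegin}) and belongs to $B$ (so it is $< \f{pushCountAtFreeze}$, from the inclusion test at line~\ref{lin:pushIncludedInBatch}).

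For the \emph{upper bound}, I will show that $q$ cannot return from $\op{op}$ until after $c_B$'s successful \op{CAS}. If $q$ is the combiner, then it returns only after completing \op{PushToStack}, which returns only after its own successful \op{CAS} at line~\ref{line:pushlp} — so the linearization point precedes the response trivially. If $q$ is not the combiner, then a non-eliminated, non-combiner $q$ reaches the spin loop at line~\ref{lin:pushNonCombinerWait} and waits until \f{isBatchApplied} becomes \texttt{TRUE}; by the pseudocode this flag is set (line~\ref{push:release-all}) only by the unique combiner $c_B$ (Observation establishing one combiner per batch) \emph{after} its call to \op{PushToStack} has returned, which in turn happens only after the successful \op{CAS} of line~\ref{line:pushlp}. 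Hence $q$ observes \f{isBatchApplied} = \texttt{TRUE}, and thus returns, strictly after $c_B$'s linearizing \op{CAS}. Combining the two bounds places the linearization point of $\op{op}$ within $[\text{invocation}(\op{op}), \text{response}(\op{op})]$.

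The main obstacle, and the only place requiring genuine care rather than a direct appeal to the pseudocode ordering, is the lower-bound argument's dependence on the claim that the combiner's read at line~\ref{push:readToTemp} (or its wait at line~\ref{lin:waitPushNotAnn}) genuinely touches $q$'s slot — equivalently, that the index arithmetic lines up so that a non-eliminated operation of $B$ always falls inside the combiner's substack-building range. This is exactly where the three earlier tests (inclusion, elimination, combiner) must be invoked in concert, together with the fact (from the "exactly one combiner" observation) that $\x{mySeqNum}_{c_B} = \f{popCountAtFreeze}$, so that the range $[\f{popCountAtFreeze}, \f{pushCountAtFreeze})$ is precisely the set of sequence numbers of non-eliminated \op{push} operations in $B$. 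Once that alignment is established, everything else is a routine read of happens-before edges in the code.
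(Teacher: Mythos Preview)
Your proposal is substantially more rigorous than the paper's own proof, which essentially just restates \emph{where} the linearization point is placed (the combiner's successful \op{CAS}) without arguing that this event falls inside $\op{op}$'s interval. Your two-bound decomposition and your upper-bound argument via \f{isBatchApplied} are exactly right and are missing from the paper.

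There is, however, one genuine wrinkle in your ``symmetric'' claim for \op{pop}. Your lower-bound argument for \op{push} hinges on the combiner waiting at line~\ref{lin:waitPushNotAnn} for $q$'s write to \f{eliminationArray} at line~\ref{push:write-value}. But a \op{pop} operation never writes to \f{eliminationArray}, and \op{PopFromStack} contains no analogue of the wait loop at line~\ref{lin:waitPushNotAnn}; the pop-combiner simply traverses stack nodes and issues its \op{CAS} without any dependence on the other pop threads in $B$. So the lower bound for \op{pop} cannot be obtained by the route you describe. The fix is to argue instead through the freezing mechanism: since $\op{op}$ is included in $B$, $q$'s fetch\&increment on \f{popCount} (line~\ref{pop:F&A}) returned a value strictly less than \f{popCountAtFreeze}, hence it executed before the freezer's read of \f{popCount} at line~\ref{lin:backoffspot}; the combiner, whether it is the freezer or a thread released from the spin at line~\ref{pop:popNonFreezerWait}, reaches \op{PopFromStack} only after freezing completes, so its \op{CAS} at line~\ref{line:poplp} follows $q$'s fetch\&increment and hence $q$'s invocation. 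This freezing-based argument in fact works uniformly for both \op{push} and \op{pop}, and would let you drop the index-alignment subtlety you flag at the end.
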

\begin{proof}	
	Let \op{op} be a non-eliminated operation. Then there exists a combiner thread $c_B$ for batch $B$ that applies \op{op}. 
	(The combiner $c_B$ might be the thread $q$ that initiated the \op{op}, or another executing a \op{pop} operation in $B$.)
	
	If \op{op} is a \op{push}, we linearize it at $c_B$'s successful execution of the \op{CAS} at line~\ref{line:pushlp}
	in \op{PushToStack}. Similarly, if \op{op} is a \op{pop}, 
	we linearize it at $c_B$'s successful execution of the \op{CAS} at line~\ref{line:poplp}
	in \op{PopFromStack}. 
	Note that $c_B$ may push or pop several operations with a single \op{CAS}. All such operations are linearized in order of their sequence numbers, with the operation having smaller sequence number linearized first.
\end{proof}	

\begin{lemma}
	The linearization point of an eliminated \op{pop} operation occurs immediately  before the linearization point of its corresponding eliminated \op{push} operation, and both points lie within the execution interval of the \op{pop}.
	\label{lem:elimopslp}
\end{lemma}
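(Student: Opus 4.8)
The plan is to unfold the linearization scheme described above and then track, in real time, where its two linearization points fall. Let $Q$ be the eliminated \op{pop} operation, executed by thread $q$, and let $P$ be the \op{push} that eliminated it, executed by thread $p$. Using the observations already established, I would first record two facts: $P$ and $Q$ read the same batch $B$, and they were assigned the same sequence number $i$. Indeed, $q$ returns the value stored in $B$'s \f{eliminationArray}$[i]$ at line~\ref{pop:Elimination-Return} of \op{Pop}, and this slot of $B$ is written only at line~\ref{push:write-value} of \op{Push}, by the (unique) thread whose \op{push} has sequence number $i$ in $B$; since that write is the one $q$ observes, that thread is $P$. Let $t$ be the step at which $p$ performs the fetch\&increment on \f{B.pushCount} (line~\ref{push:F&A} of \op{Push}) and $t'$ the step at which $q$ performs the fetch\&increment on \f{B.popCount} (line~\ref{pop:F&A} of \op{Pop}). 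By the scheme, both $P$ and $Q$ are linearized at the instant $\tau = \max(t,t')$, with $Q$ ordered immediately before $P$.

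Next I would show that $\tau$ belongs to the execution interval of $Q$. The lower bound is immediate: $t'$ is a step of $Q$, so $\tau = \max(t,t') \ge t'$, which is at or after the start of $Q$. For the upper bound I would trace $q$ after its fetch\&increment: it passes the inclusion test at line~\ref{pop:popIncludedInBatch} because $i < \f{B.popCountAtFreeze}$ (as $Q$ belongs to $B$) and the elimination test at line~\ref{pop:Elimination-Check} because $i < \f{B.pushCountAtFreeze}$ (as $B$ contains the \op{push} $P$ with sequence number $i$), so $q$ reaches the spin at line~\ref{pop:Elimination-Wait} and stays there until $B$'s \f{eliminationArray}$[i]$ is non-null, returning only afterwards at line~\ref{pop:Elimination-Return}. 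That slot is made non-null solely by $p$ at line~\ref{push:write-value}, which $p$ executes strictly after its own fetch\&increment at $t$. Hence $t$ strictly precedes the return step of $Q$, and so does $t'$; therefore $\tau = \max(t,t')$ occurs strictly before $Q$ completes. Together with the lower bound this places $\tau$, and hence both linearization points (which coincide with $\tau$), inside the execution interval of $Q$.

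It remains to justify ``immediately''. Every linearization point produced by the construction is either a successful \op{CAS} step of some combiner (for a non-eliminated operation) or an instant $\max(t_1,t_2)$, where $t_1,t_2$ are the two fetch\&increment steps of an eliminated pair. In the step-wise execution model distinct instructions occur at distinct steps, so each such $\max$ equals one particular fetch\&increment step, hence differs from every \op{CAS} step and from the $\max$ of any other pair. Consequently $(P,Q)$ is the only pair of operations assigned the instant $\tau$; the scheme orders them consecutively at $\tau$ with $Q$ first, so no operation is linearized strictly between them, which is precisely the claim. (That this pairing is also consistent with LIFO semantics in the global sequential history is a separate matter, handled elsewhere, and not needed here.)

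The step I expect to be the main obstacle is the upper-bound half of the interval argument: there one must make precise the causal dependency by which the spin at line~\ref{pop:Elimination-Wait} forces $p$'s write at line~\ref{push:write-value} --- and hence $p$'s fetch\&increment at $t$ --- to precede $Q$'s return; without this link, $\tau = \max(t,t')$ could a priori fall after $Q$ has finished. The remaining parts are bookkeeping about sequence numbers, batch membership, and the distinctness of execution steps.
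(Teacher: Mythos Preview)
Your proof is correct but places the linearization instant differently from the paper. The paper's own proof of this lemma linearizes both operations at line~\ref{pop:Elimination-Return}, the step where the \op{pop} reads the exchanged value; you instead follow the main-text description and take $\tau = \max(t,t')$, the later of the two fetch\&increment steps. Your choice makes the interval claim nontrivial and forces the causal chain through the spin at line~\ref{pop:Elimination-Wait} and the write at line~\ref{push:write-value}, which you handle correctly; the paper's choice makes the interval claim immediate, since line~\ref{pop:Elimination-Return} is by construction a step of the \op{pop}. For the ``immediately'' part, the paper simply asserts that no other operation is linearized between the two, whereas you give an explicit step-distinctness argument separating fetch\&increment instants from \op{CAS} instants and from each other. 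Your treatment is more thorough; the paper's is a two-sentence assertion. It is worth noting that the paper is internally inconsistent here: the main text and the appendix use different instants for eliminated pairs, and this lemma orders \op{pop} before \op{push} while the proof of Lemma~\ref{lem:resp} orders \op{push} before \op{pop}. You prove the lemma exactly as stated, which is the right call.
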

\begin{proof}
	
Consider an eliminated pair of \op{push} and \op{pop} operation. Both operations are linearized at line~\ref{pop:Elimination-Return}, when the \op{pop} exchanges its value with the corresponding \op{push} during the \op{pop}'s execution interval. No other operation can be linearized between these two eliminated operations.
\end{proof}

\begin{invariant}
At all times $t$, \x{stackTop} represents the state of the stack that would result from sequentially executing all operations linearized before $t$, in order of their linearization points.
\label{inv:validstack}
\end{invariant}
 We prove the invariant by induction over a sequence of linearizing steps performed by operations on the shared stack. 
 
 \textbf{Induction Base:} Initially, the stack is empty, meaning no operations have modified it. If there were a sequence of eliminated operations they must cancelled each other out, so the net effect is that the stack remains empty. The invariant therefore holds vacuously.
 
 \textbf{Induction Hypothesis:}
 Assume the invariant holds up to a step $i$, due to some operation \op{op}.
 
 \textbf{Induction:} Consider the $(i+1)_(th)$ operation.
\begin{itemize}
\item
If it is an eliminated operations, it does not modify the stack and is cancelled out with the $i_{th}$ operation, leaving the stack unchanged. Moreover, there can be no other operation between the operations $i$ and $i+1$.

\item 
If it is a combiner executing a \op{CAS} to add non-eliminated \op{push} operations, the combiner sequentially adds the corresponding nodes to the top of the stack, in the order of the operations within the batch.

\item
If it is a combiner executing a CAS to perform $k$ \op{pop} operations, the combiner sequentially removes the top $k$ nodes from the stack, following the order of the pop operations in the batch.
\end{itemize}

 In all cases, the set of operations performed during step $i+1$ preserves the invariant.
 
 Therefore, by induction, the invariant holds for all steps.

\begin{lemma}
	The response of each operation is consistent with its linearization point.
\label{lem:resp}
\end{lemma}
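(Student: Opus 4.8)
The plan is to prove the lemma by a case analysis on the type of the operation \op{op}, showing that in each case the value the implementation returns equals the value a sequential stack returns when all operations are applied in the order of their linearization points. The workhorse is \invref{validstack}: at every linearization point, \x{stackTop} together with the \texttt{next}-chain hanging off it encodes exactly the sequential state of the operations linearized so far. I would combine this with the two placement facts already established above -- that every non-eliminated operation is linearized at its combiner's \op{CAS}, and that an eliminated \op{pop}/\op{push} pair is linearized at two consecutive points with no operation in between (\lemref{elimopslp}).

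First I would dispose of the routine cases. A \op{push} returns \texttt{TRUE} at line~\ref{lin:pushReturn} whether it is eliminated, is served by a combiner, or acts as the combiner itself, which matches the sequential specification in which a \op{push} always succeeds. A \op{peek} is linearized at its read of \x{stackTop}, so by \invref{validstack} it returns the current sequential top, or \texttt{EMPTY} when the stack is empty. For an eliminated \op{pop} with its partner \op{push}, \lemref{elimopslp} places the \op{push} immediately before it with nothing linearized in between, so in the sequential order the \op{push} adds its argument $v$ and the \op{pop} immediately removes and returns $v$; the code does exactly this, as the \op{push} records a node carrying $v$ in \f{eliminationArray} (line~\ref{push:write-value}) and the \op{pop}, after waiting for that slot to become non-null, returns that node's \x{value} (line~\ref{pop:Elimination-Return}), with no net change to \x{stackTop}.

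The substantive case is a non-eliminated \op{pop} with sequence number $s+j$ in a batch $B$ whose combiner has sequence number $s=\x{pushCountAtFreeze}$; let $k=\x{popCountAtFreeze}-s$ be the number of non-eliminated \op{pop} operations of $B$. I would argue: (i) all $k$ of them are linearized at the combiner's successful \op{CAS} on \x{stackTop} (line~\ref{line:poplp}), ordered by sequence number, so this \op{pop} is the $(j{+}1)$-st to take effect; (ii) by \invref{validstack}, immediately before that \op{CAS} the sequential stack reads $a_0,a_1,\dots$ from the top, and since the \op{CAS} succeeds with expected value equal to the \x{top} the combiner read, the combiner walked exactly the chain $a_0\to a_1\to\cdots$ for $k$ steps, or until the stack ended, and stored its head in \x{subStackTop}; (iii) in \op{GetValue} this \op{pop} advances $j=\x{mySeqNum}-\x{pushCountAtFreeze}$ nodes from \x{subStackTop}, returning $a_j$ if the stack held at least $j{+}1$ elements and \texttt{EMPTY} otherwise. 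This is precisely the sequential outcome after the earlier $j$ \op{pop} operations removed $a_0,\dots,a_{j-1}$. I would then note the symmetric bookkeeping for non-eliminated \op{push} operations: the combiner builds a substack of the corresponding nodes in increasing sequence-number order and prepends it with the single \op{CAS} of line~\ref{line:pushlp}, each such \op{push} returns \texttt{TRUE}, and the resulting \x{stackTop}-chain is what the sequential pushes would produce, consistent with \invref{validstack}.

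The hard part will be the non-eliminated \op{pop} case, and within it two points: first, that the chain the combiner traverses to form the removed substack is the very chain \invref{validstack} prescribes at the linearization point -- this is where the absence of ABA, guaranteed by the reclamation scheme deployed with \SEC, is needed -- and second, that the indexing lines up, since the combiner is itself a non-eliminated \op{pop} and \op{GetValue}'s offset $\x{mySeqNum}-\x{pushCountAtFreeze}$ must map the $k$ pops bijectively onto the $k$ removed nodes, with surplus \op{pop} operations (when the stack was too short to serve all $k$) returning \texttt{EMPTY}. With this lemma together with the placement lemmas and \invref{validstack}, the constructed sequential history is legal and order-respecting, i.e., \SEC\ is linearizable.
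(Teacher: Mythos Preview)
Your proposal is correct and follows essentially the same approach as the paper's proof: a case analysis on operation type (\op{peek}, eliminated pairs, non-eliminated \op{pop} operations) that leans on \invref{validstack} to align \x{stackTop} with the sequential state at each linearization point. Your argument is considerably more detailed than the paper's---in particular you spell out the \op{GetValue} indexing for non-eliminated \op{pop} operations and flag the ABA concern---but the structure and key ideas are the same.
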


Only \op{peek} and \op{pop} operations produce responses, so we will consider only these cases. 

\op{Peek} operations are linearized when \x{stackTop} is read. They response is the the value at the top node of the shared stack. 

Eliminated \op{push} or \op{pop} operations are linearized together such that the \op{pop} operation is immediately linearized after the matching \op{push}, with no other operation linearized between them. The \op{pop} returns the value exchanged with its matching \op{push}. In both the cases, the claim follows immediately.

Next, consider a batch of \op{pop} operations, $pop_1$, $pop_2$, $\cdots$, $pop_k$, linearized in order by a successful CAS on the \x{stackTop}. The thread executing $pop_1$ is the combiner that executes the CAS to remove top $k$ nodes from the shared stack. 
The response of $pop_i$ is the value of $i_{th}$ node removed from the stack, where $1$ $>$ $i$ $\leq$ $k$ (see the \func{GetValue()}). By Invariant~\ref{inv:validstack}, these responses respect the linearization order.

From Invariant~\ref{inv:validstack} and Lemma~\ref{lem:resp} the following theorem follows.
\begin{theorem}
	\AggStack\ is a linearizable stack implementation. 
\end{theorem}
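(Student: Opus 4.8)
The plan is to establish linearizability by proving the two ingredients that the paper already isolates as lemmas and an invariant: (i) every operation has a linearization point inside its execution interval, and (ii) the sequence of responses dictated by those linearization points matches a legal sequential execution of a stack. Since the excerpt has already fixed the linearization points — eliminated pairs at \lineref{pop:Elimination-Return}, non-eliminated \op{push}/\op{pop} at the successful \op{CAS} on \x{stackTop} (lines~\ref{line:pushlp} and~\ref{line:poplp}), and \op{peek} at its read of \x{stackTop} — the first thing I would do is confirm that these points all lie within the corresponding execution intervals; this is exactly the content of the lemma on non-eliminated operations and \lemref{elimopslp}, plus the trivial observation for \op{peek}, so I would simply invoke them.

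Next I would build the consistency argument around \invref{validstack}, whose proof proceeds by induction over the totally ordered sequence of linearization steps. The base case is the prefilled/empty stack with only cancelling eliminated pairs in the prefix. For the inductive step one distinguishes the three kinds of linearizing events: an eliminated pair (stack unchanged, and by \lemref{elimopslp} nothing is interleaved between the two), a combiner's successful push-\op{CAS} (which atomically prepends a chain of $m$ nodes built in \op{PushToStack}, equivalent to $m$ sequential pushes in sequence-number order), and a combiner's successful pop-\op{CAS} (which atomically advances \x{stackTop} past $k$ nodes, equivalent to $k$ sequential pops, with \op{GetValue} handing the $i$-th removed node to the operation with the matching offset, or \texttt{EMPTY} when the stack was shorter). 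The key sub-claim here is that between two linearization points assigned to the same \op{CAS} no other operation's linearization point falls — which holds because ties at a single \op{CAS} are broken by sequence number and distinct \op{CAS} events on \x{stackTop} are totally ordered in real time. With the invariant in hand, \lemref{resp} follows: \op{peek} returns the current top value (consistent by the invariant evaluated just before the read), an eliminated \op{pop} returns the value of its paired \op{push} (consistent because it is linearized immediately after that push with nothing between), and a combined \op{pop} returns the value the invariant says sits at the appropriate depth.

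Finally, I would assemble the theorem: the chosen linearization points give a total order that (a) respects real-time precedence of non-overlapping operations, because each point is inside its own interval, and (b) induces responses agreeing with the sequential stack by \invref{validstack} and \lemref{resp}; together these are precisely the definition of linearizability, so \AggStack\ is a linearizable stack implementation. I would also note in passing the two auxiliary observations — each batch has exactly one freezer and exactly one combiner — since the well-definedness of "the combiner's \op{CAS}" as a linearization point relies on them.

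The main obstacle I anticipate is not any single case but making the interleaving claim airtight: one must argue that the linearization order is a genuine total order consistent with real time even though several operations can share a linearization point (all served by one combiner \op{CAS}) and eliminated pairs are linearized at an \op{eliminationArray} exchange that is temporally sandwiched inside the \op{pop}'s interval but whose matching \op{push} may extend well beyond it. Verifying that the \op{push}'s interval still contains the shared point (it does, since the \op{push} cannot return before writing \x{eliminationArray}[\x{mySeqNum}] at \lineref{push:write-value}, and that write precedes the \op{pop}'s read at \lineref{pop:Elimination-Return}), and that no third operation can wedge between the pair, is the delicate bookkeeping that the rest of the proof rests on.
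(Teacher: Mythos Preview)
Your proposal is correct and follows essentially the same route as the paper: the paper's proof of the theorem is the single sentence ``From \invref{validstack} and \lemref{resp} the theorem follows,'' and you have unpacked exactly that derivation, invoking the interval lemmas and the freezer/combiner-uniqueness observations along the way. Your treatment is more explicit than the paper's about why the resulting total order respects real-time precedence and about the bookkeeping for eliminated pairs, but the logical skeleton is identical.
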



\section{Extra Experiments on 56 Threads Machine: Emerald}

In \tabref{exp4m3}, we study the batching, the elimination and the combining degree of \SEC. 
The different columns show measurements for different update rates. 
We see that the batching degree and the elimination degree increase as the update rate increases, 
whereas the combining degree remains the same. 
For instance, in the 100\% update rate setting, the average size of a batch (across different thread counts) 
is 41, which is the largest among all settings, of which 78\% of the operations within a batch are eliminated.
Therefore, the main  performance advantages in \SEC\ come from efficient batching and elimination.


\begin{table}[ht]
\centering
\begin{tabular}{|c|c|c|c|}
\hline
      & \multicolumn{3}{c|}{\textbf{SEC}} \\ \hline
\textbf{Workload $\rightarrow$}      & \textbf{100\% upd} & \textbf{50\% upd} & \textbf{10\% upd} \\ \hline
Batching Degree    & 17.8    & 17.2   & 14  \\ \hline
\%Elimination      &  79\%   &  79\%  & 77\%  \\ \hline
\%Combining        & 21\%    &  21\%  & 23\%   \\ \hline
\end{tabular}
\caption{Batching degree (average size of batches during an execution), \%elimination (percent of operations eliminated per batch), and \%combining (percent of operations not eliminated per batch) in SEC for EXP1 on shuttle. Column label 100 implies workloads with 100\% updates and so on.}
\label{tab:exp4m3}
\end{table}

\newpage
\section{Experiments on 96 Threads Machine: IceLake}
\label{sec:icelakeextexp}

\begin{figure}[!htb]
\centering
    \includegraphics[width=0.33\linewidth, keepaspectratio]{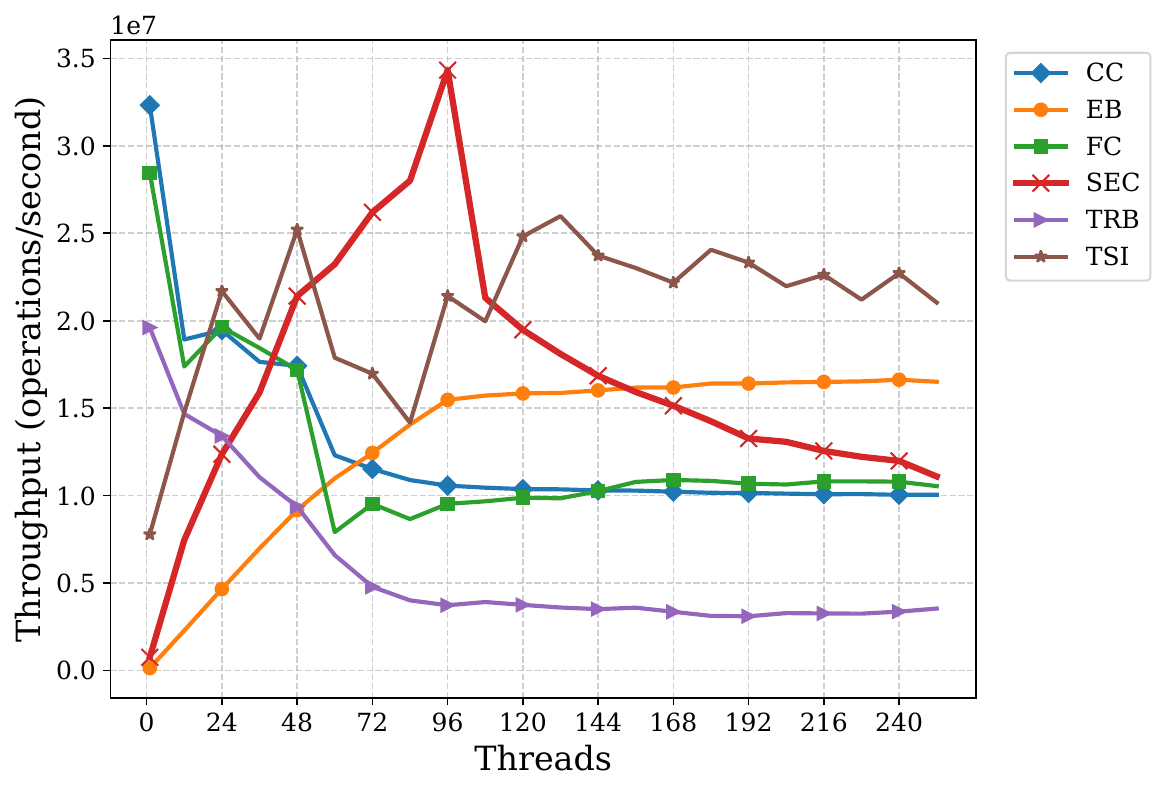}\hfill
    \includegraphics[width=0.33\linewidth, keepaspectratio]{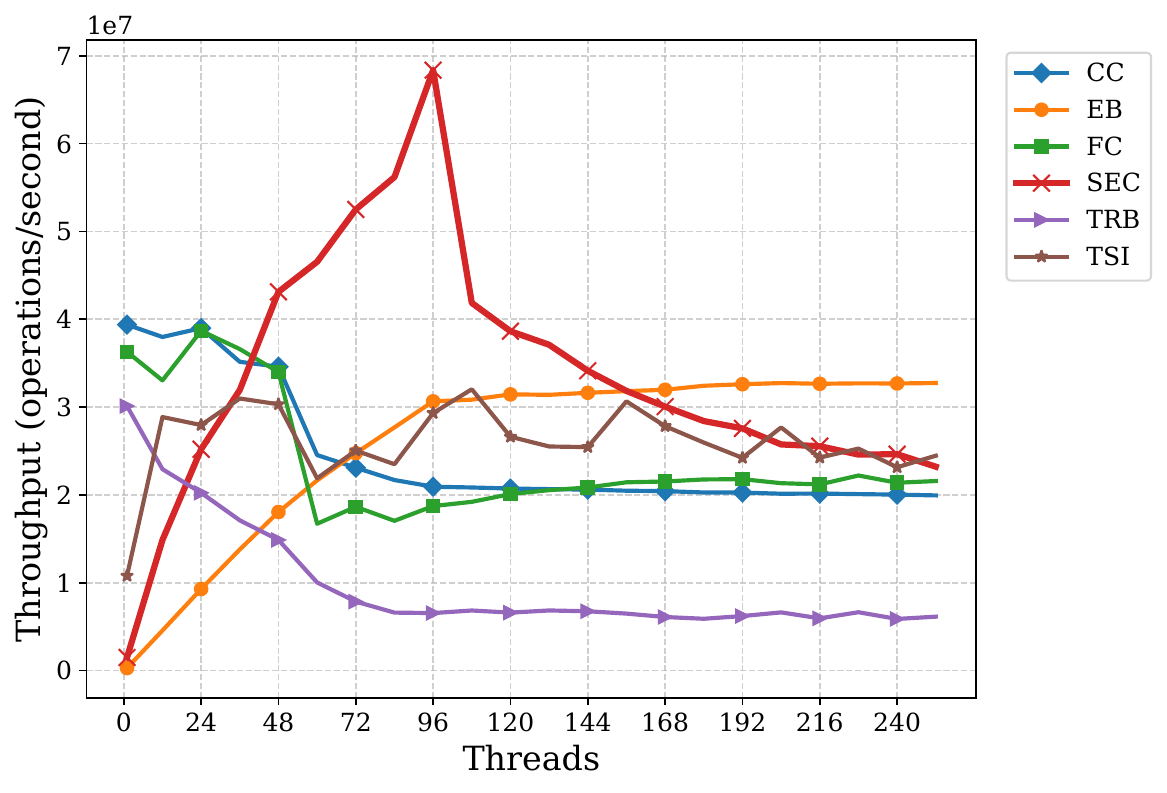}\hfill
    \includegraphics[width=0.33\linewidth, keepaspectratio]{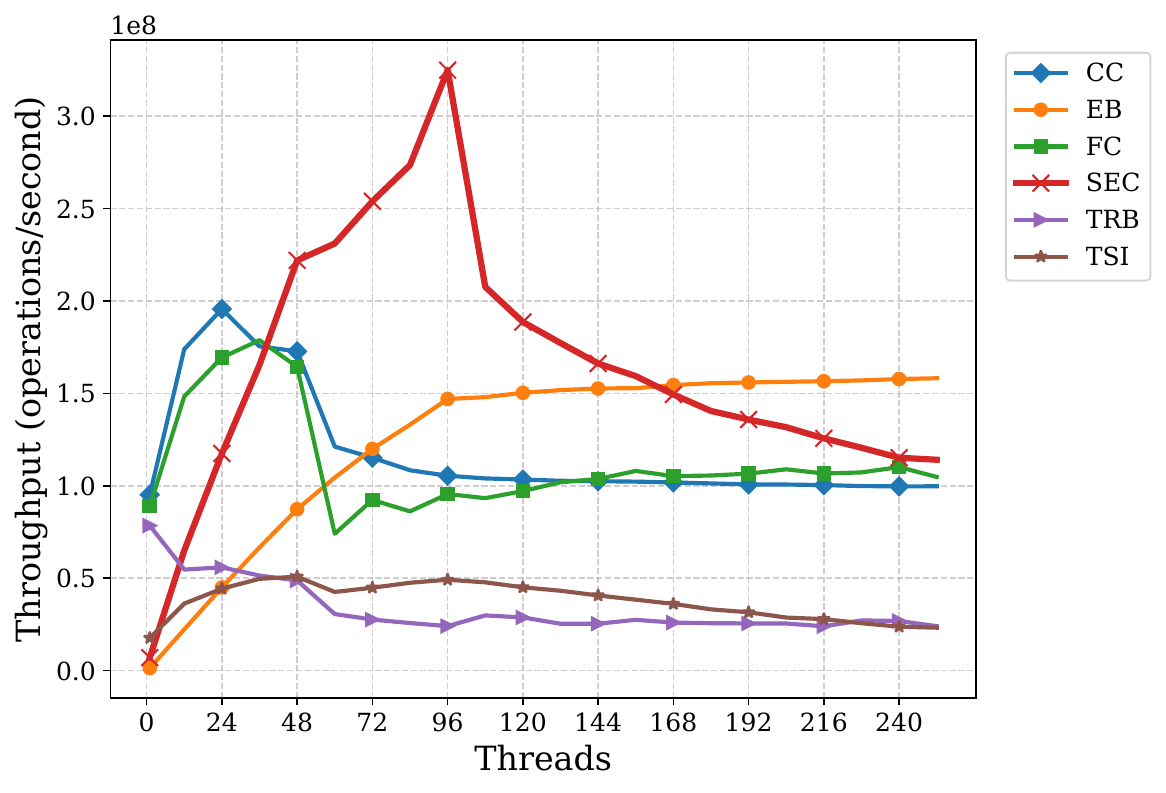}\hfill
    \caption{Throughput with varying threads. (Left) 100\% updates. (Middle) 50\% updates. (Right) 10\%updates. Y-axis: throughput in millions of operations per second. X-axis: \#threads.}
    \label{fig:exp1tit}
\end{figure}

\begin{figure}[!htb]
    \includegraphics[width=0.49\linewidth, keepaspectratio]{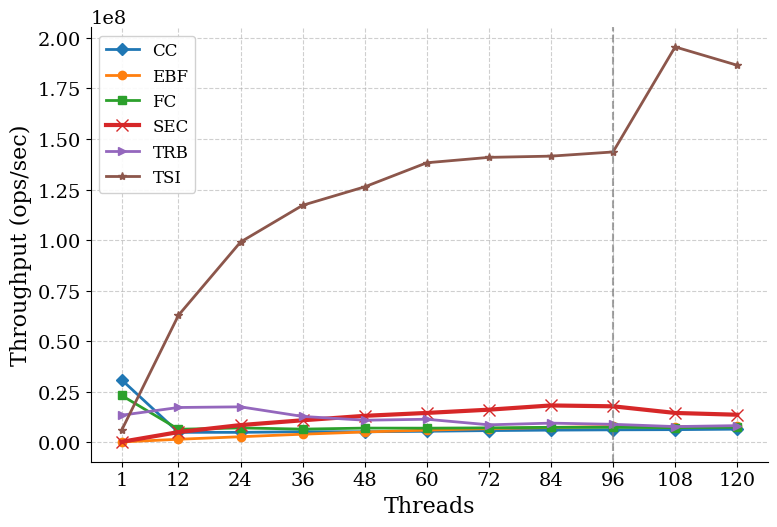}\hfill
    \includegraphics[width=0.49\linewidth, keepaspectratio]{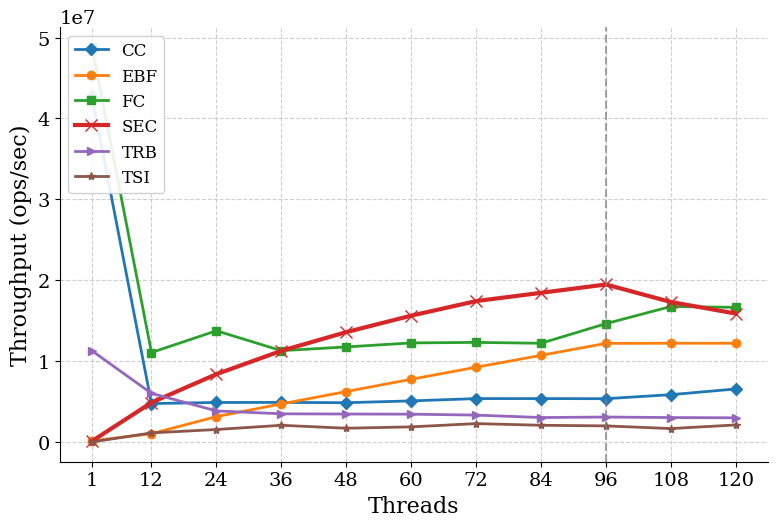}\hfill
    \caption{Throughput for push only and pop only workloads. (Left) Push only. (Right) Pop only. Y-axis: throughput in millions of operations per second. X-axis: \#threads. Number of aggregators used is two.
    }
    \label{fig:exp2tit}
\end{figure}

\begin{figure}[!htb]
\centering
    \includegraphics[width=0.33\linewidth, keepaspectratio]{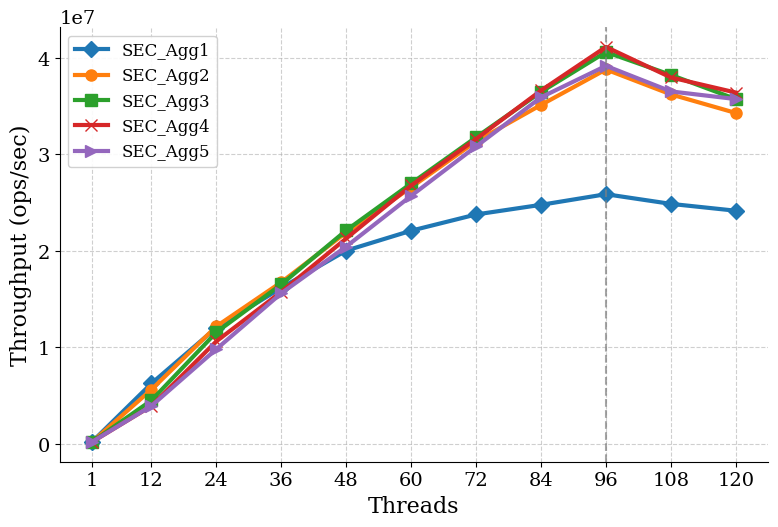}\hfill
    \includegraphics[width=0.33\linewidth, keepaspectratio]{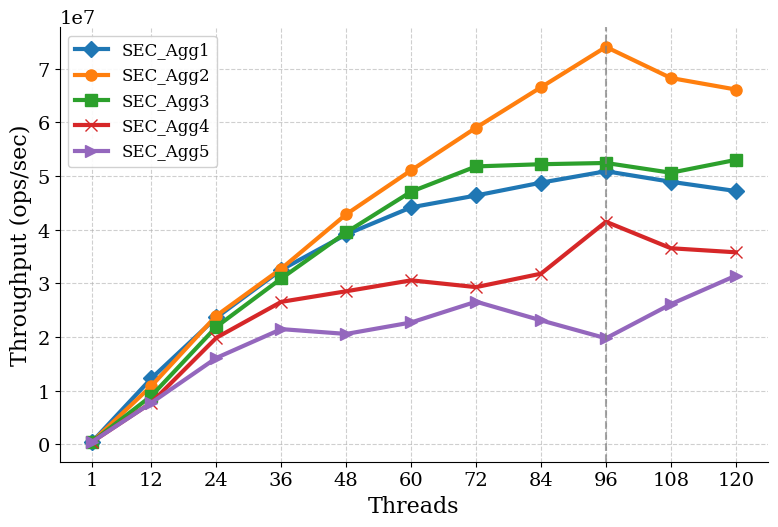}\hfill
    \includegraphics[width=0.33\linewidth, keepaspectratio]{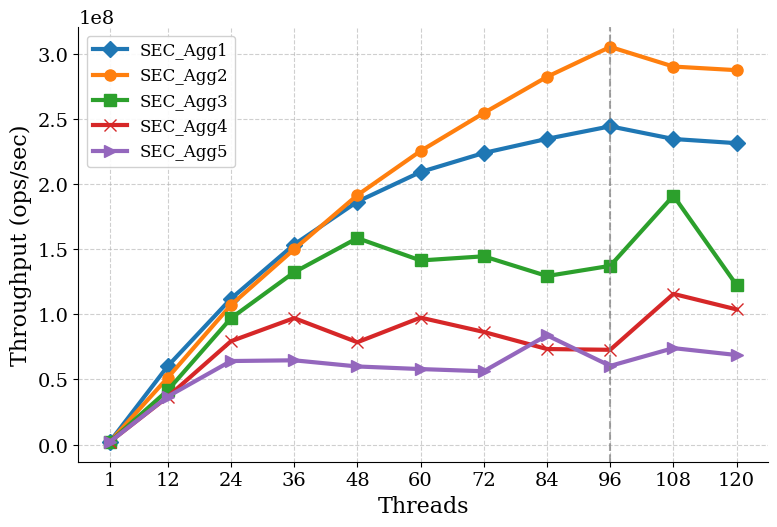}\hfill
    \caption{Self comparison with aggregators Throughput with varying threads. (Left) 100\% updates. (Middle) 50\% updates. (Right) 10\%updates. Y-axis: throughput in millions of operations per second. X-axis: \#threads.
    }
    \label{fig:exp3tit}
\end{figure}

\begin{figure}[!htb]
\centering
    \includegraphics[width=0.49\linewidth, keepaspectratio]{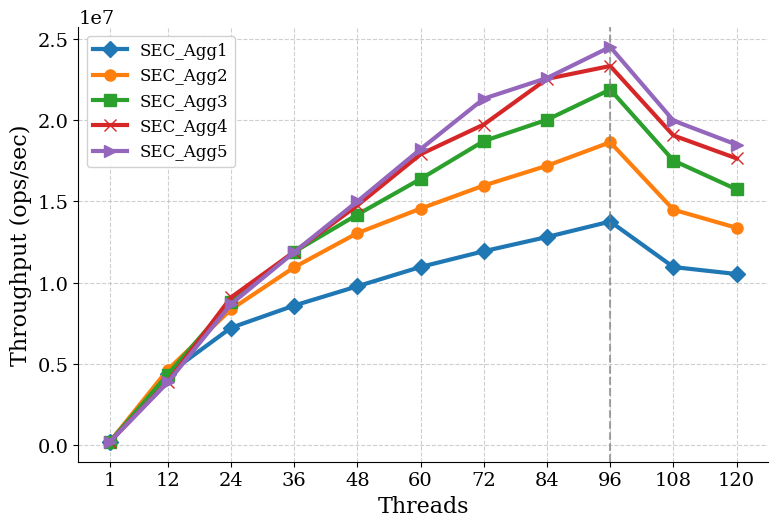}\hfill
    \includegraphics[width=0.49\linewidth, keepaspectratio]{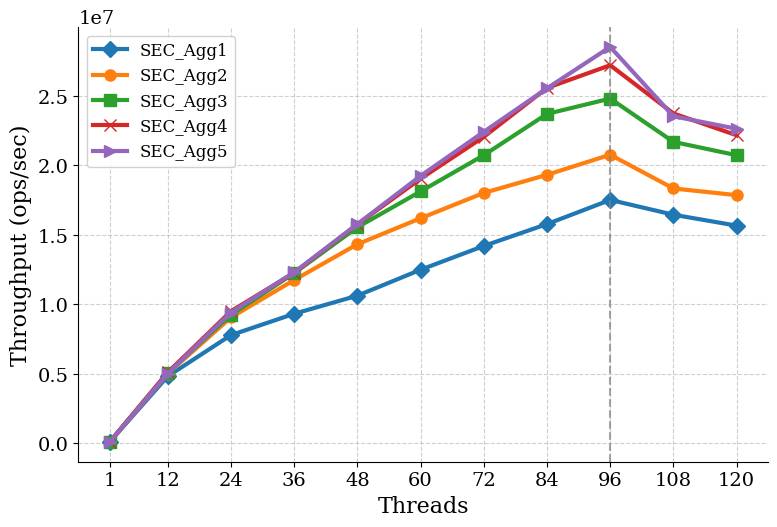}\hfill
    \caption{Self comparison with aggregators for push only and pop only workloads. Throughput with varying threads. (Left) Push only. (Right) Pop only. Y-axis: throughput in millions of operations per second. X-axis: \#threads.
    }
    \label{fig:exp3tit}
\end{figure}

\begin{table}[!htb]
\centering
\begin{tabular}{|c|c|c|c|}
\hline
      & \multicolumn{3}{c|}{\textbf{SEC}} \\ \hline
\textbf{Workload $\rightarrow$}      & \textbf{100\% upd} & \textbf{50\% upd} & \textbf{10\% upd} \\ \hline
Batching Degree    & 40    & 31   & 28  \\ \hline
\%Elimination      &  85\%   &  85\%  & 84\%  \\ \hline
\%Combining        & 15\%    &  15\%  & 16\%   \\ \hline
\end{tabular}
\caption{Batching degree (average size of batches during an execution), \%elimination (percent of operations eliminated per batch), and \%combining (percent of operations not eliminated per batch) in SEC for experiments in~\figref{exp1tit}. Column label 100 implies workloads with 100\% updates and so on.}
\label{tab:exp4tit}
\end{table}

\clearpage
\section{Experiments on 192 Threads Machine: Sapphire}
\label{sec:rapidsextexp}

We have also conducted experiments on an Intel Sapphire Rapids  (\textbf{Sapphire}) with 8 NUMA nodes supporting 2-way hyperthreading with 24 threads running on each NUMA node,
thus supporting 192 hardware threads in total. The machine has the following characteristics: 210MiB L3 cache, 3.5 GHz frequency, 380GB RAM.


\begin{figure}[!thb]
        \includegraphics[width=0.33\linewidth, keepaspectratio]{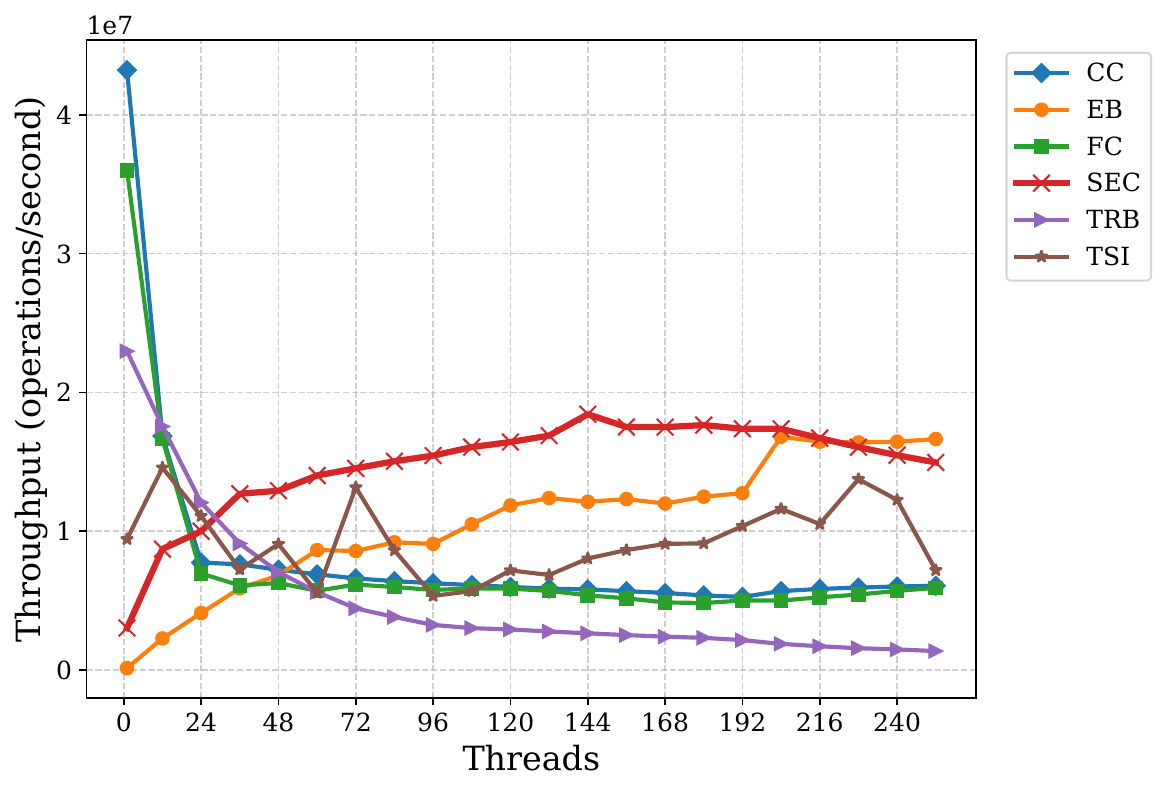}\hfill
        \includegraphics[width=0.33\linewidth, keepaspectratio]{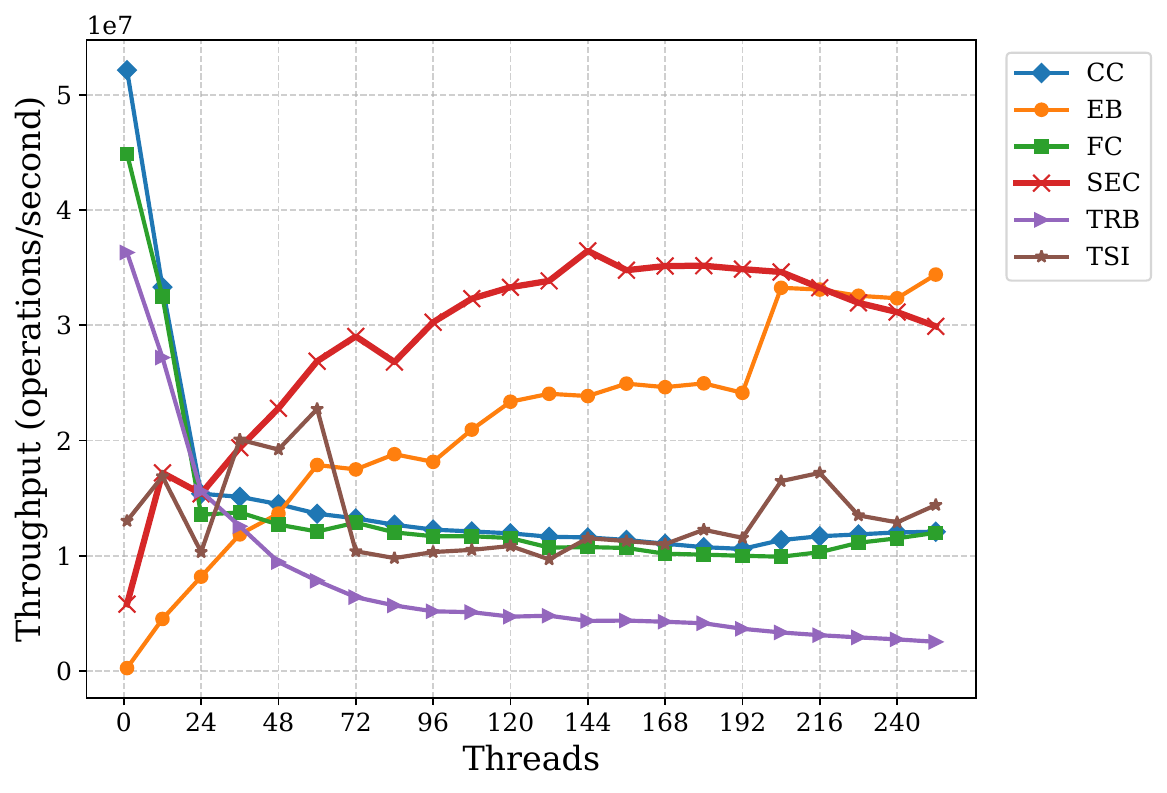}\hfill
        \includegraphics[width=0.33\linewidth, keepaspectratio]{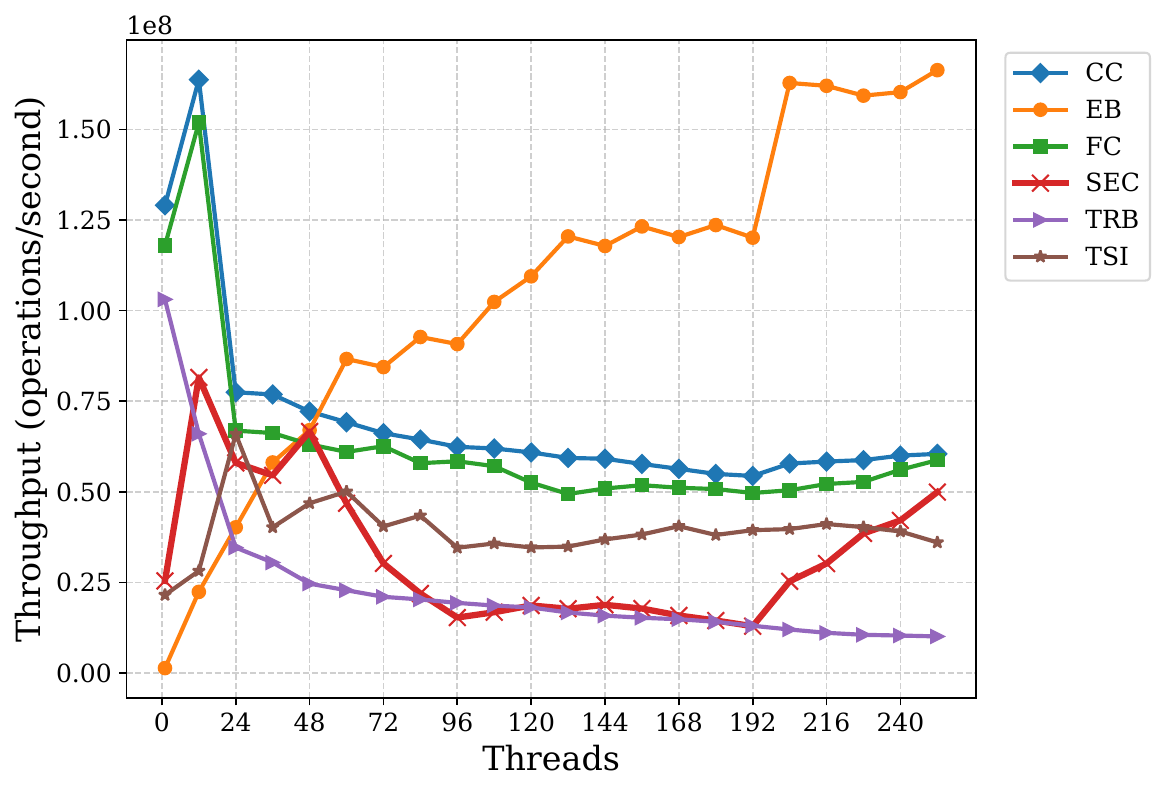}\hfill
    \caption{Throughput. (Left) 100\% updates. (Middle) 50\% updates. (Right) 10\%updates. Y-axis: throughput in millions of operations per second. X-axis: \#threads. Number of aggregators used is two.}
    \label{fig:exp1maxm}
\end{figure}

\begin{figure}[ht]
    \includegraphics[width=0.49\linewidth, keepaspectratio]{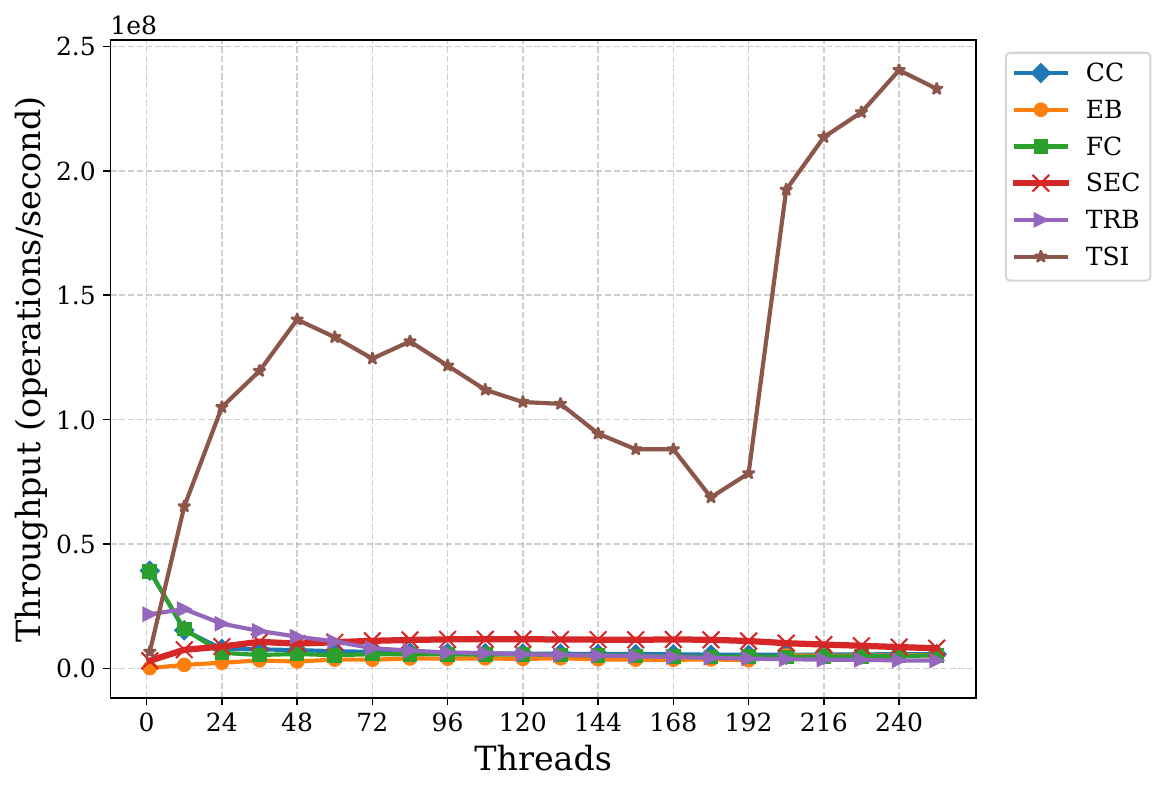}\hfill
    \includegraphics[width=0.49\linewidth, keepaspectratio]{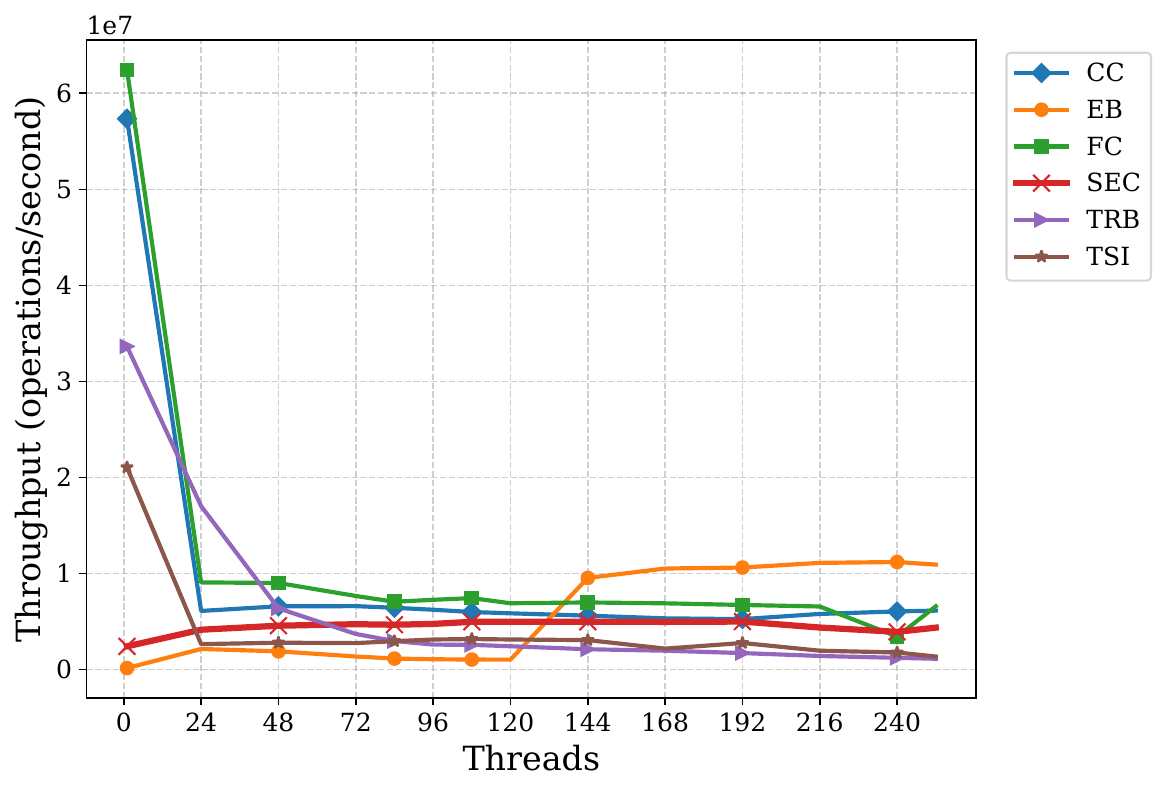}\hfill
    \caption{Throughput for push only and pop only workloads. (Left) Push only. (Right) Pop only. Y-axis: throughput in millions of operations per second. X-axis: \#threads. Number of aggregators used is two. }
    \label{fig:exp2maxm}
\end{figure}


\begin{figure}[ht]
\centering
        \includegraphics[width=0.33\linewidth, keepaspectratio]{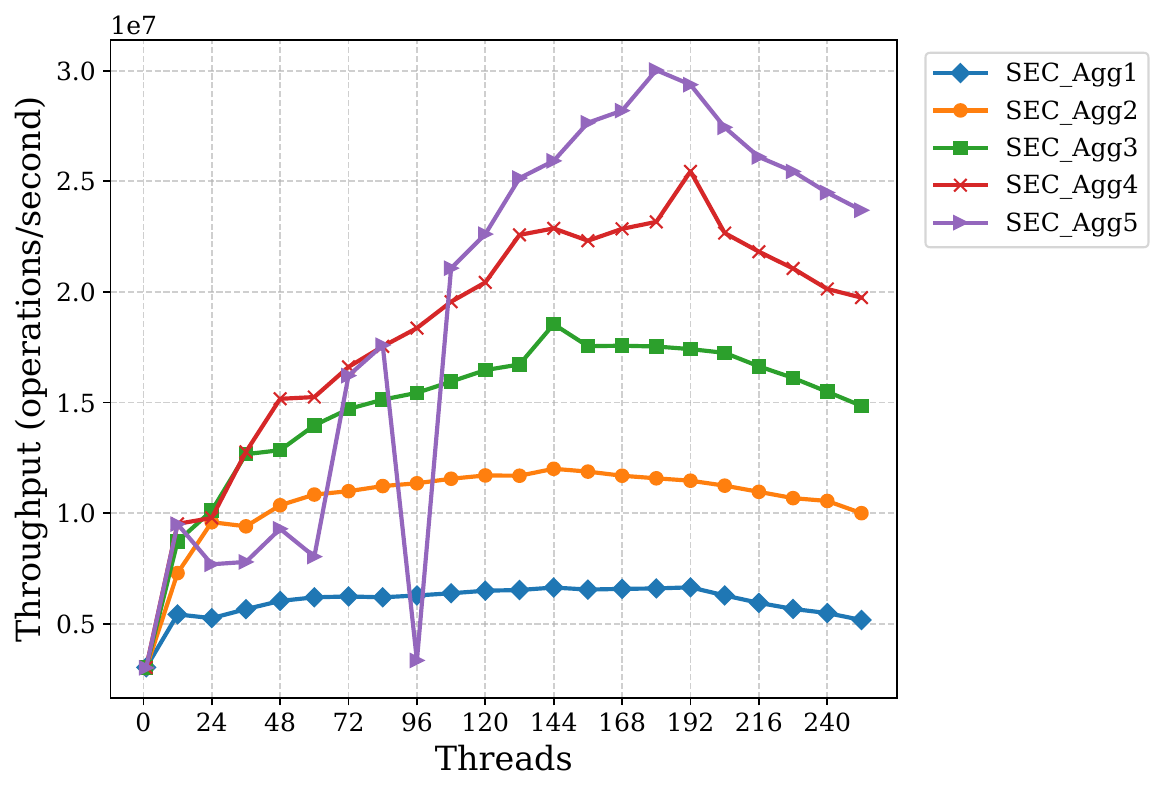}\hfill
        \includegraphics[width=0.33\linewidth, keepaspectratio]{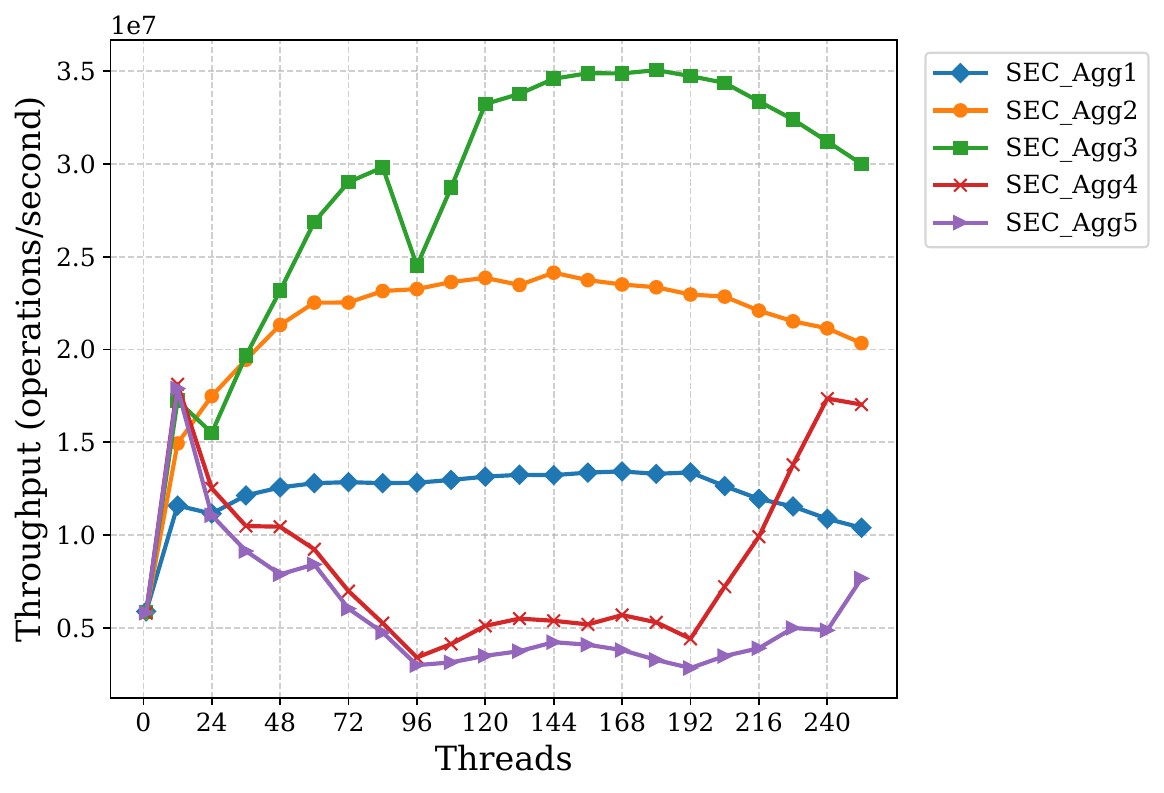}\hfill
        \includegraphics[width=0.33\linewidth, keepaspectratio]{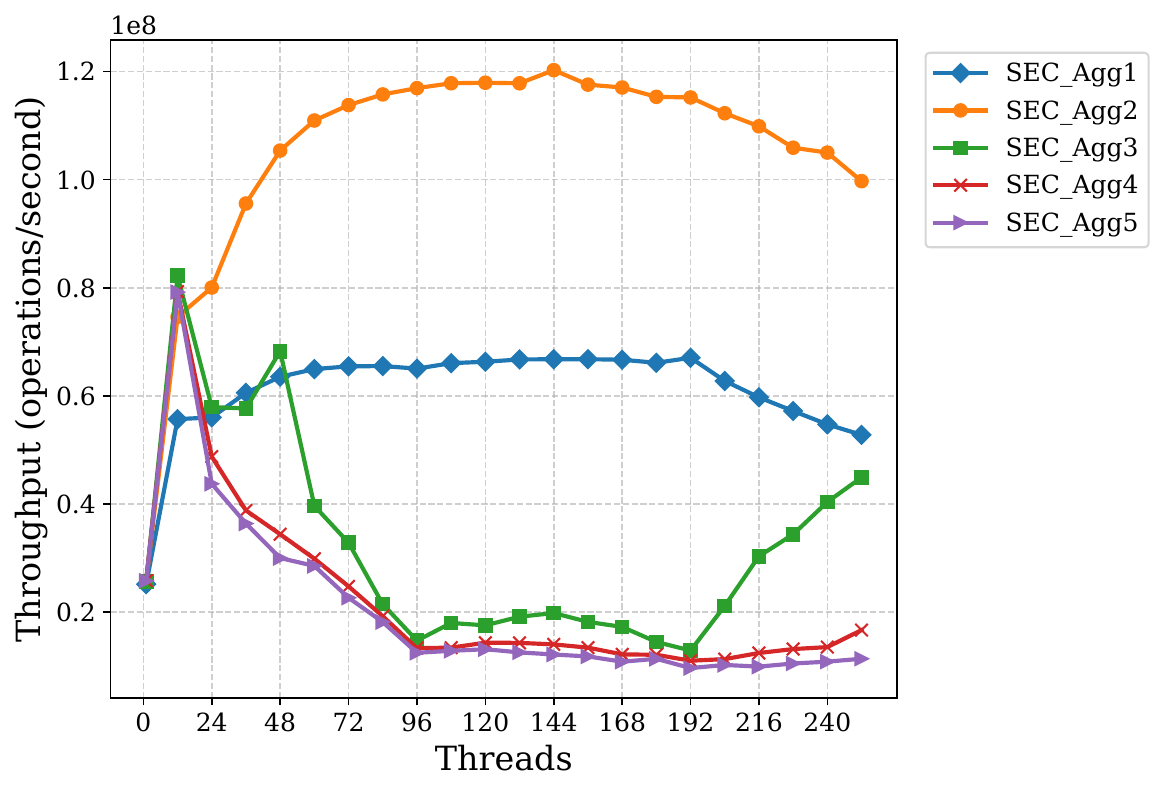}\hfill
    \caption{Comparing DCE+ throughput with various number of aggregators. From left to right, 100\% updates, 50\% updates, 10\%updates, 100\%push-only. Y-axis: Throughput. X-axis: \#threads. DCE+ with 1 aggregator is labeled as DCE+\_Agg1.}
    \label{fig:exp3maxm}
\end{figure}

\begin{figure}[ht]
\centering
        \includegraphics[width=0.49\linewidth, keepaspectratio]{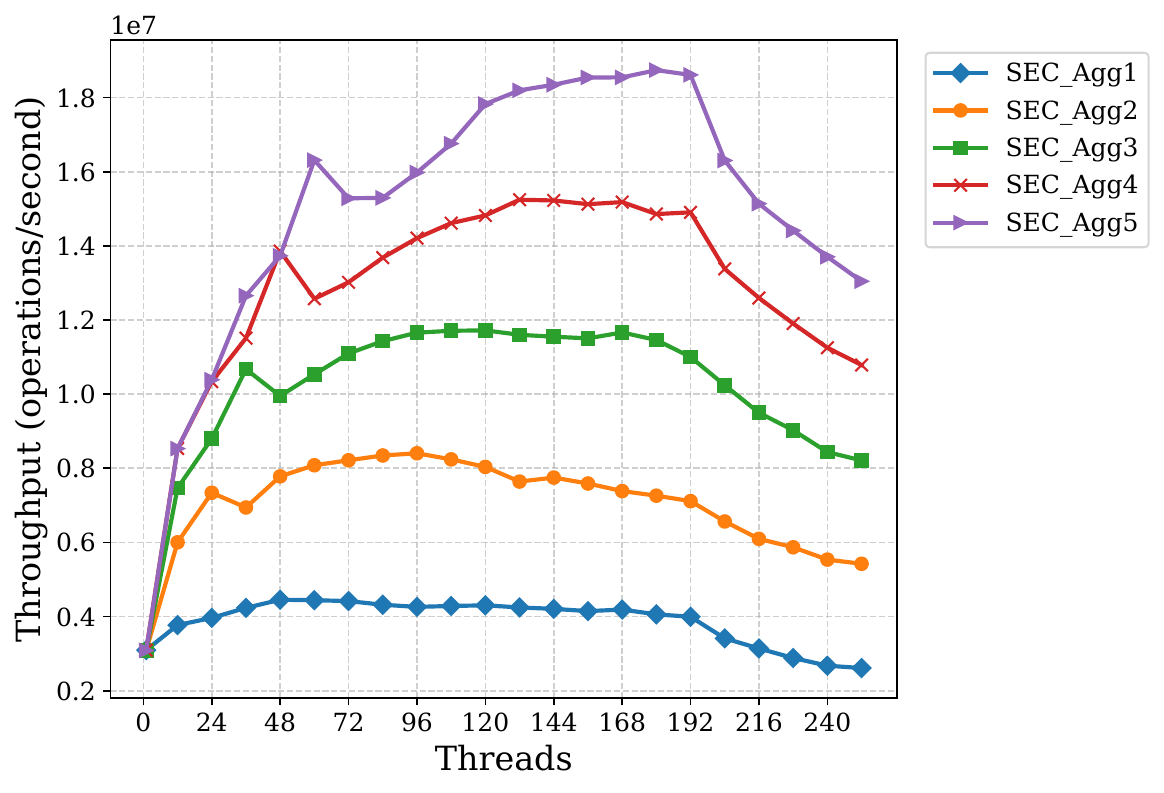}\hfill
        \includegraphics[width=0.49\linewidth, keepaspectratio]{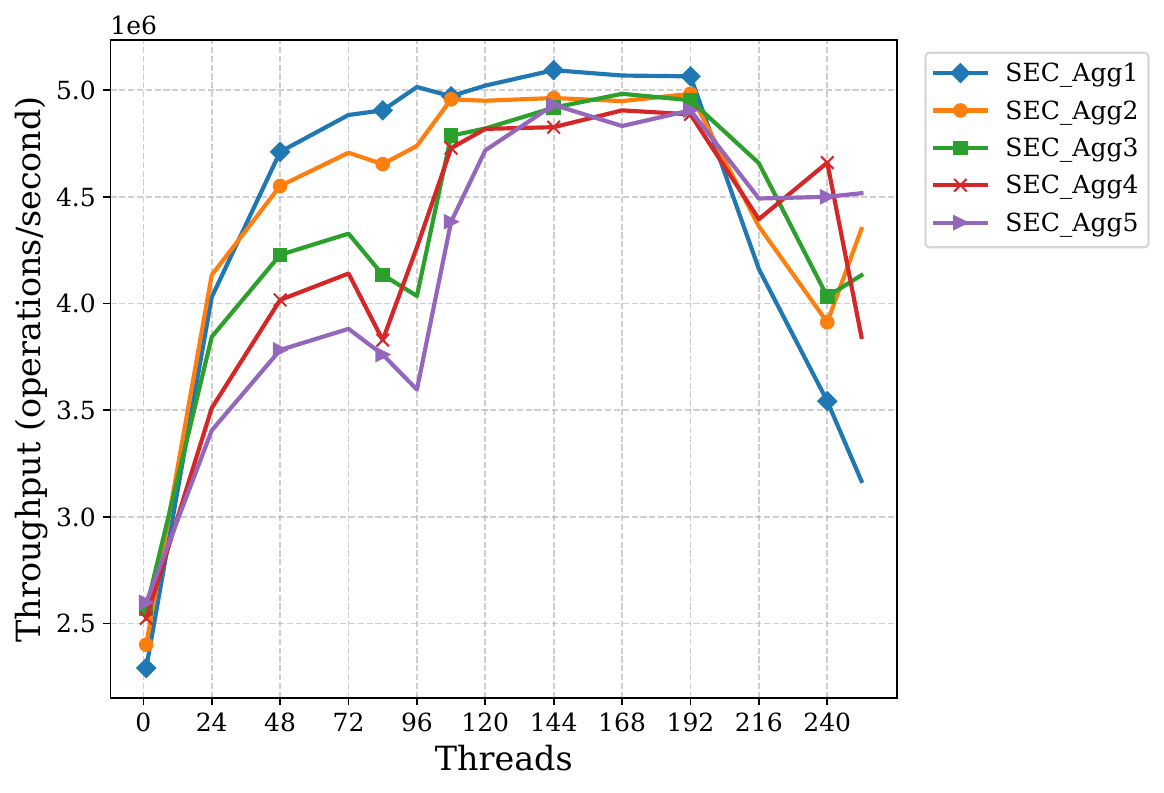}\hfill
    \caption{Self comparison with aggregators for push only and pop only workloads. Throughput with varying threads. (Left) Push only. (Right) Pop only. Y-axis: throughput in millions of operations per second. X-axis: \#threads.}
    \label{fig:exp3maxm}
\end{figure}

\begin{table}[!htb]
\centering
\begin{tabular}{|c|c|c|c|}
\hline
      & \multicolumn{3}{c|}{\textbf{SEC}} \\ \hline
\textbf{Workload $\rightarrow$}      & \textbf{100\% upd} & \textbf{50\% upd} & \textbf{10\% upd} \\ \hline
Batching Degree    & 24    & 22   & 17  \\ \hline
\%Elimination      &  77\%   &  75\%  & 70\%  \\ \hline
\%Combining        & 23\%    &  25\%  & 30\%   \\ \hline

\end{tabular}
\caption{Batching degree (average size of batches during an execution), \%elimination (percent of operations eliminated per batch), and \%combining (percent of operations not eliminated per batch) in SEC for EXP1. Column label 100 implies workloads with 100\% updates and so on.}
\label{tab:exp4maxm}
\end{table}





\end{document}